\documentclass[10pt]{article}

\usepackage[letterpaper,margin=0.9in]{geometry}
\usepackage[T1]{fontenc}
\usepackage[utf8]{inputenc}
\usepackage{lmodern}
\usepackage{microtype}
\usepackage{amsmath,amssymb,amsthm,mathtools,bm}
\usepackage{tikz}
\usetikzlibrary{arrows.meta,positioning,calc,shapes.geometric,patterns}
\usepackage{xcolor}
\usepackage{hyperref}
\usepackage[nameinlink,capitalise,noabbrev]{cleveref}
\usepackage{authblk}
\hypersetup{
  colorlinks=true,
  linkcolor=blue!55!black,
  citecolor=blue!55!black,
  urlcolor=blue!55!black,}

\numberwithin{equation}{section}
\newtheorem{theorem}{Theorem}[section]
\newtheorem{proposition}[theorem]{Proposition}

\newcommand{\A}{\mathcal A}
\newcommand{\Hh}{\mathcal H}
\newcommand{\R}{\mathbb R}
\newcommand{\id}{\mathrm{id}}

\newcommand{\dd}{\mathrm d}
\newcommand{\ii}{\mathrm i}
\newcommand{\e}{\mathrm e}
\newcommand{\Tr}{\operatorname{Tr}}

\title{An Algebraic Formulation of Thermality in Nested Rindler Wedges}

\author[1]{Waheed A. Dar}
\author[1]{Prince A. Ganai}
\author[2]{Nirmalya Kajuri}

\affil[1]{Department of Physics, National Institute of Technology Srinagar,
Kashmir-190006, India\\}

\affil[2]{School of Physical Sciences, Indian Institute of Technology Mandi,
Himachal Pradesh, India\\
\vspace{2mm}\\
\texttt{Email: waheed.dar1729@gmail.com, princeganai@nitsri.ac.in, nirmalya@iitmandi.ac.in}}

\date{\today}

\begin{document}
\maketitle
\vspace{-2.3em}

\begin{abstract}
 Padmanabhan and Kolekar had demonstrated the phenomenon of recursive thermality in two dimensional free field theories where the Rindler vacuum appears thermal to an observer accelerated with respect to the Rindler observer. We study the same construction at an algebraic level for conformal field theories.
 At each step, a parent wedge is conformally mapped to a reference Minkowski spacetime, where a new right Rindler wedge and its left sibling are selected. We illustrate that under the Bisognano--Wichmann property, thermality holds at every generation: the state associated with a parent wedge becomes a KMS state when restricted to its right child. Further, using the modular structure of the wedge and its causal sibling, we identify the sibling algebra with the modular commutant and derive the corresponding recursive modular flow explicitly. 
\end{abstract}

\tableofcontents

\section{Introduction}
\label{sec:introduction}

The Unruh effect shows that the Minkowski vacuum restricted to a
Rindler wedge is thermal with respect to the one-parameter Lorentz-boost flow preserving that wedge
\cite{Fulling:1972md,Davies:1974th,Unruh:1976db}. The algebraic formulation of the same is given by the Bisognano--Wichmann theorem \cite{Bisognano:1975ih,Bisognano:1976za,Schroer:1998pj,Doplicher:1984zz}. The Bisognano--Wichmann theorem shows that in the vacuum representation, the
modular automorphism group of the wedge algebra is generated by Lorentz
boosts. The restricted vacuum is therefore a KMS state for the
wedge-preserving boost flow. Physically, this corresponds to the Minkowski vacuum appearing as a thermal bath to the accelerated observer.

Now suppose there is a second observer, accelerated with respect to the original accelerated observer. How would they observe the vacuum of the first observer i.e. the Rindler vacuum? The question was answered for a massless free field theory in the context of 2 dimensions by Kolekar and Padmanabhan in \cite{Kolekar:2013hra}. They applied a
second Rindler transformation inside the original Rindler wedge and showed that the Rindler vacuum appears
thermal in the second  frame (which they called ``Rindler--Rindler''). They also found that the spectrum of
particles seen by Rindler--Rindler observers in the Minkowski vacuum is
identical to that seen by an observer accelerating through a real thermal
bath \cite{Kolekar:2013aka}. The construction was later extended to a sequence
of nested Rindler frames and it was shown that the vacuum of an earlier Rindler
frame appears thermal in a later one
\cite{Lochan:2025mru}. Other recent works that have studied the nested frames in 2 dimensions are
\cite{Nair:2024psn,Sahota:2024thk,Dubey:2025hab,Jha:2026qrp,Derezinski:2004xz,Gutti:2022xov}.

These studies describe
recursive thermality mainly through particle spectra, Bogoliubov
transformations, and detector responses \cite{Lochan:2016cxt,Jha:2026prm}. In this paper, we ask whether the same
recursive construction can be formulated directly at the level of states,
observables, and the flows that preserve the corresponding regions. This
leads naturally to a KMS and modular description of recursive Rindler
thermality. We use slightly different terminology, we call the original wedge the parent region and its Rindler transform the child region.  

Our starting point is the geometric intuition behind the effect described in \cite{Kolekar:2013hra,Lochan:2025mru}.The key point here is that in 2-dimensions, the map between Minkowski space and Rindler space is conformal. For a free field theory, the conformally transported Minkowski vacuum is the Rindler vacuum. The map from the parent Rindler wedge to the child wedge (or from Rindler to Rindler-Rindler) is again the exact same map as from Minkowski to Rindler. Thus the same relation repeats: the restricted parent vacuum becomes thermal in the child wedge. This relation repeats for every subsequent child wedge ,see the figure \ref{fig:nested-rindler}

In more detail, 
each Rindler transformation takes a
parent region and introduces two causally complementary children
associated with the right and left wedges of a reference Minkowski plane.
The reference Minkowski vacuum provides the state from which the corresponding
parent state is constructed. The right child is used as the parent for the
next step, while the same construction can be applied to every sibling to
generate a binary causal tree. We focus on the right-descending branch
\begin{equation}
  W_0\supset W_1\supset W_2\supset\cdots .
  \label{eq:intro-chain}
\end{equation}

This also explains why the recursive thermality phenomenon is specific to
two dimensions. In higher dimensions, the mapping between Minkowski space
and Rindler space is not conformal. Two dimensions is special because the
local conformal maps factor into independent reparametrizations of the
two null coordinates $u$ and $v$. This allows each parent to be mapped back
to a complete reference plane and the construction to be repeated,
producing the closed iterated-log structure used below. The construction
also requires the underlying theory to be conformal. This is why the
original construction was formulated for massless free field theories;
here we extend it to conformal field theories.

The central result is that the state associated with each parent region is thermal when restricted to its right child, in the precise KMS sense:
\begin{equation}
  \left.\omega_n\right|_{\A(W_{n+1})}
  \text{ is KMS for }\alpha_s^{(n+1)}
  \quad\text{with}\quad
  \beta_{n+1}=\frac{2\pi}{g_{n+1}}.
  \label{eq:intro-main}
\end{equation}
Thus thermality at successive Rindler levels follows from the combination of conformal transport and the Bisognano--Wichmann property. It is established directly at the level of states and local algebras, rather than being inferred only from particle spectra or detector responses.

When the reference vacuum representation is irreducible, the transported
parent state is pure, while its restriction to the child is KMS. With wedge duality, the causally complementary sibling is identified with the modular commutant:
\begin{equation}
  J_n M_{\omega_n}(W_{n+1})J_n
  =M_{\omega_n}(W_{n+1})'
  =M_{\omega_n}(D_n).
  \label{eq:intro-commutant}
\end{equation}
Thus the pure parent gives a thermal child, while the causally complementary sibling realizes the corresponding modular commutant.

A second result follows by shifting the root of the recursive construction by one generation. The state $\omega_{n-1}$, when restricted to $W_n$, pulls back to a thermal state $\tau_n$ on the reference Minkowski plane. The next recursive experiment is then an ordinary Rindler split of a thermal state. For $n=1$ this gives a state--algebra--flow formulation of the second observation of Ref.~\cite{Kolekar:2013hra}. The ordinary Rindler observer flow in the thermal state is not, in general, the modular flow of the restricted thermal state. The latter is the geometric thermal-sector modular flow studied in Refs.~\cite{Borchers:1998ye,Caminiti:2025hjq}.

The paper is organized as follows. Section~\ref{sec:geometry} develops the
recursive geometry and transported states. Section~\ref{sec:free-current}
gives an explicit free-current verification of the KMS condition.
Section~\ref{sec:recursive-bw} proves recursive Bisognano--Wichmann
modularity and  identifies the sibling with the
modular commutant. Section~\ref{sec:generators} develops the recursive
modular generators, their universal near-horizon form, and the associated
stress profiles. Section~\ref{sec:one-generation} establishes the
one-generation-shift identity and distinguishes the observer flow from the
thermal-sector modular flow. Section~\ref{sec:conclusions_outlook} discusses the
conclusions and outlook. Technical details are
collected in the appendices.

\begin{figure}[t]
\centering

\begin{tikzpicture}[
    line width=1.1pt,
    line cap=round,
    line join=round,
    >=stealth
]

\definecolor{BlueRegion}{RGB}{190,220,245}
\definecolor{Done}{RGB}{250,220,175}

\begin{scope}[shift={(0,0)}]

    \draw
        (0,2)
        -- (2,0)
        -- (0,-2)
        -- (-2,0)
        -- cycle;

    \node at (0,-2.45) {$W_0$};

\end{scope}

\draw[->, line width=1.1pt]
    (2.3,0) -- (3,0);

\begin{scope}[shift={(5.35,0)}]

    \fill[BlueRegion]
        (-2,0)
        -- (-1,1)
        -- (0,0)
        -- (-1,-1)
        -- cycle;

    \fill[BlueRegion]
        (0,0)
        -- (1,1)
        -- (2,0)
        -- (1,-1)
        -- cycle;

    \draw
        (0,2)
        -- (2,0)
        -- (0,-2)
        -- (-2,0)
        -- cycle;

    \draw
        (-1,1) -- (1,-1);

    \draw
        (-1,-1) -- (1,1);

    \node at (-0.72,0) {$D_0$};

    \node at (0.95,0) {$W_1$};

\end{scope}

\draw[->, line width=1.1pt]
    (7.80,0) -- (8.5,0);

\begin{scope}[shift={(10.70,0)}]

    \fill[BlueRegion]
        (-2,0)
        -- (-1,1)
        -- (0,0)
        -- (-1,-1)
        -- cycle;

    \fill[Done]
        (0,0)
        -- (0.5,0.5)
        -- (1,0)
        -- (0.5,-0.5)
        -- cycle;

    \fill[BlueRegion]
        (1,0)
        -- (1.5,0.5)
        -- (2,0)
        -- (1.5,-0.5)
        -- cycle;

    \draw
        (0,2)
        -- (2,0)
        -- (0,-2)
        -- (-2,0)
        -- cycle;
        
    \draw
        (-1,1) -- (1,-1);

    \draw
        (-1,-1) -- (1,1);

    \draw
        (0.5,-0.5) -- (1.5,0.5);

    \draw
        (0.5,0.5) -- (1.5,-0.5);

    \node at (-0.62,0) {$D_0$};

    \node at (0.55,0) {$D_1$};

    \node at (1.55,0) {$W_2$};

\end{scope}

\end{tikzpicture}

\caption{Conformal diagram of 2D Minkowski being split into child wedges, In the conformal diagram, all the wedges are self-similar and each split in self-similar to the previous split.This is the geometric intuition behind the results}
\label{fig:nested-rindler}
\end{figure}

\section{Recursive Geometry and Transported States}
\label{sec:geometry}

We work in two-dimensional Minkowski spacetime $M$, with null coordinates
\begin{equation}
  v=t+x,\qquad u=t-x,\qquad
  \dd s^2=-\dd u\,\dd v.
  \label{eq:null-coordinates}
\end{equation}
In this section, we describe the recursive construction of child regions from
parent regions, illustrated in Fig.~\ref{fig:causal-tree}. At each generation,
a parent region is conformally identified with a reference copy of Minkowski
spacetime. The right and left Rindler wedges of the reference plane then define
two causally complementary children.

\subsection{Parent--Child Construction and State Transport}
\label{subsec:parent-split}

At level $n$, conformal coordinates $(t_n,x_n)$ identify the parent region
$W_n$ with a reference copy of Minkowski spacetime, denoted
$M_{\mathrm{ref}}$. We write
\begin{equation}
  \chi_n:M_{\mathrm{ref}}\longrightarrow W_n,
  \qquad
  v_n=t_n+x_n,\qquad
  u_n=t_n-x_n,
  \label{eq:parent-chart}
\end{equation}
for the orientation- and time-orientation-preserving conformal map. Since the
spacetime argument specifies whether an algebra belongs to the reference plane
or to the transported region, we use the same net symbol $\A$ for both.

In $M_{\mathrm{ref}}$, let
\begin{equation}
  R_n=\{v_n>0,\ u_n<0\},
  \qquad
  L_n=\{v_n<0,\ u_n>0\}
  \label{eq:reference-wedges}
\end{equation}
be the standard right and left Rindler wedges. Their images under $\chi_n$
define the next right child and its sibling:
\begin{equation}
  W_{n+1}:=\chi_n(R_n),
  \qquad
  D_n:=\chi_n(L_n)=W_n\cap W_{n+1}'.
  \label{eq:child-split}
\end{equation}
Here $W_{n+1}'$ denotes the \emph{causal complement} of the spacetime region
$W_{n+1}$, rather than its ordinary set complement. Thus $W_{n+1}$ is the
child on which the next recursive step is performed, while $D_n$ is its
causally complementary sibling within $W_n$.

\begin{figure}[t]
\centering
\label{fig:causal-tree}
\begin{tikzpicture}[
    line width=1.2pt,
    line cap=round,
    line join=round,
    >=stealth
]


\definecolor{WOneBoundary}{RGB}{55,120,220}
\definecolor{WTwoBoundary}{RGB}{145,55,210}
\definecolor{RedBoundary}{RGB}{245,70,90}


\draw[
    WOneBoundary
]
    (-3.5,3.5) -- (3.5,-3.5);

\draw[
    WOneBoundary
]
    (-3.5,-3.5) -- (3.5,3.5);


\draw[
    WTwoBoundary
]
    (2,0) -- (3.6,1.6);

\draw[
    WTwoBoundary
]
    (2,0) -- (3.6,-1.6);


\draw[
    RedBoundary,
    dashed,
    line width=1.3pt
]
    (1,1) -- (2,0);

\draw[
    RedBoundary,
    dashed,
    line width=1.3pt
]
    (1,-1) -- (2,0);


\node at (1,0.1) {$D_1$};

\node at (3.0,0.09) {$W_2$};

\end{tikzpicture}

\caption{The $D_{1}$ sibling causally complements the $W_2$ wedge. This figure shows it in a spacetime diagram.}
\end{figure}
At the root,
\begin{equation}
  W_0=M,\qquad \chi_0=\id,\qquad D_0=W_1'.
  \label{eq:root-data}
\end{equation}
For $n\geq1$, $D_n$ becomes the finite causal diamond between successive
horizons in the explicit realization below. Applying the construction to
every child generates the full binary causal tree; the formulas in this
paper follow the branch in \eqref{eq:intro-chain}.

The standard boost in the reference plane is parametrized by a positive number
$g_{n+1}$,
\begin{equation}
  B_s^{(n+1)}:
  (v_n,u_n)\longmapsto
  \bigl(\e^{g_{n+1}s}v_n,\e^{-g_{n+1}s}u_n\bigr).
  \label{eq:reference-boost}
\end{equation}
The regions in \eqref{eq:reference-wedges}, and hence the child regions, do not
depend on $g_{n+1}$; its role as a clock normalization is explained in
Section~\ref{subsec:clock-normalization}.

\label{subsec:state-transport}
The conformal chart $\chi_n$ also provides a map between observables on the reference Minkowski plane and
observables in the parent region $W_n$. At the level of local algebras, conformal covariance gives the corresponding
local isomorphisms; assuming these maps are chosen coherently under inclusion, they define the quasilocal
$C^*$-isomorphism
\begin{equation}
  \Phi_n:\A(M_{\mathrm{ref}})\longrightarrow \A(W_n).
  \label{eq:Phi-def}
\end{equation}
The technical assumptions underlying this local transport are summarized in Appendix~\ref{app:conformal-net}.

A bounded causal diamond is a region
$O=I^+(p)\cap I^-(q)$ for timelike-separated points $p\prec q$; in null
coordinates it is a product $O=I_v\times I_u$ of bounded open intervals.
For every such $O\subset W_n$ whose closure is compact and lies inside
$W_n$, the chart $\chi_n$ induces an isomorphism from the reference local
algebra of $\chi_n^{-1}(O)$ to $\A(O)$.

Let $\omega_{\mathrm{ref}}$ denote the Minkowski vacuum on the reference plane. The state at level $n$ is defined by transporting
this vacuum to the parent region:
\begin{equation}
  \omega_n:=\omega_{\mathrm{ref}}\circ\Phi_n^{-1}.
  \label{eq:omega-def}
\end{equation}
Thus, for any observable $A\in\A(M_{\mathrm{ref}})$,
\begin{equation}
  \omega_n(\Phi_n(A))=\omega_{\mathrm{ref}}(A).
  \label{eq:transported-observable-identity}
\end{equation}

In this sense, $\omega_n$ is the reference vacuum viewed through the level-$n$ conformal identification.

Primary fields acquire their usual chiral Jacobian
factors, while the stress tensor also contains the anomalous Schwarzian
contribution.

In level-$n$ coordinates, the physical metric on $W_n$ is
\begin{equation}
  \dd s^2=
  \exp\!\left[2\sum_{k=1}^{n}g_kx_k(t_n,x_n)\right]
  \bigl(-\dd t_n^2+\dd x_n^2\bigr),
  \label{eq:recursive-metric}
\end{equation}
where the earlier $x_k$ are recursively expressed in the final coordinates.
For the free scalar, understood modulo its zero mode or on neutral test
functions,
\begin{equation}
  \omega_n\!\left(\phi(t_n,x_n)\phi(t_n',x_n')\right)
  =-\frac{1}{4\pi}
  \log\!\left\{
    \mu^2\left[(x_n-x_n')^2-(t_n-t_n'-\ii0)^2\right]
  \right\}.
  \label{eq:free-scalar-parent}
\end{equation}
The first two states are
\begin{equation}
  \omega_0=\omega_M,\qquad \omega_1=\omega_R,
  \label{eq:first-states}
\end{equation}
where $\omega_R$ is the constructive Fulling vacuum \cite{Fulling:1972md}.
For the free quasi-free current theory this agrees with the usual Fulling
ground-state construction. In a general CFT, equality with every abstract
parent-time ground state is not automatic: the exponential chart is a local
conformal construction rather than a single global vacuum-preserving
conformal symmetry of the compactified theory, and distinct representations
may support distinct ground or thermal states
\cite{Camassa:2011te,Camassa:2011wk,Tanimoto:2018mfn}. The recursive theorem therefore
refers specifically to the transported states in \eqref{eq:omega-def}.

\subsection{First Two Generations and Recursive Flow}
\label{subsec:first-two-levels}

The first Rindler chart is
\begin{equation}
  v=\frac{1}{g_1}\e^{g_1v_1},
  \qquad
  u=-\frac{1}{g_1}\e^{-g_1u_1},
  \qquad
  v_1=t_1+x_1,
  \quad u_1=t_1-x_1.
  \label{eq:first-rindler-chart}
\end{equation}
Equivalently,
\begin{equation}
  t=\frac{\e^{g_1x_1}}{g_1}\sinh(g_1t_1),
  \qquad
  x=\frac{\e^{g_1x_1}}{g_1}\cosh(g_1t_1),
  \label{eq:first-rindler-tx}
\end{equation}
with
\begin{equation}
  \dd s^2=\e^{2g_1x_1}(-\dd t_1^2+\dd x_1^2),
  \qquad
  W_1=\{x>|t|\}=\{v>0,\ u<0\},
  \label{eq:first-rindler-metric}
\end{equation}
and
\begin{equation}
  \zeta_1=\partial_{t_1}=g_1(v\partial_v-u\partial_u).
  \label{eq:zeta1}
\end{equation}
Denote the elementary map by $E_1$. Treating $(t_1,x_1)$ as the reference
plane of the framed parent $W_1$, introduce
\begin{equation}
  v_1=\frac{1}{g_2}\e^{g_2v_2},
  \qquad
  u_1=-\frac{1}{g_2}\e^{-g_2u_2}.
  \label{eq:second-rindler-chart}
\end{equation}
Then
\begin{equation}
  W_2=\left\{v>\frac1{g_1},\ u<-\frac1{g_1}\right\}
      =\left\{x-\frac1{g_1}>|t|\right\},
  \label{eq:W2}
\end{equation}
\begin{equation}
  D_1=\left\{0<v<\frac1{g_1},\ -\frac1{g_1}<u<0\right\},
  \label{eq:D1}
\end{equation}
and the child-time vector in the original coordinates is
\begin{equation}
  \zeta_2=\partial_{t_2}
  =g_2\left[v\log(g_1v)\,\partial_v
          -u\log(-g_1u)\,\partial_u\right].
  \label{eq:zeta2}
\end{equation}
This is an ordinary boost in the parent coordinates $(t_1,x_1)$,
conformally transported to level zero. It is the future-directed geometric
KMS flow of the transported Rindler vacuum on $W_2$; its signed relation to
the standard Tomita modular parameter is fixed in
Appendix~\ref{subsec:tomita-physical-time}. The original Minkowski vacuum on
the same translated wedge instead pairs with the ordinary translated-wedge
Killing boost. This distinction is developed in
Section~\ref{subsec:competing-boost} and is central to the
one-generation-shift theorem.

\label{subsec:general-recursion}
The same construction extends recursively to arbitrary levels. For every $k\geq1$ define the elementary map $E_k$
by
\begin{equation}
  v_{k-1}=\frac{1}{g_k}\e^{g_kv_k},
  \qquad
  u_{k-1}=-\frac{1}{g_k}\e^{-g_ku_k}.
  \label{eq:Ek}
\end{equation}
The framed parent charts are
\begin{equation}
  \chi_n=E_1\circ E_2\circ\cdots\circ E_n,
  \qquad n\geq1.
  \label{eq:chi-composite}
\end{equation}
The inverse maps naturally introduce two chiral families of iterated
logarithms:
\begin{align}
  \Lambda_1^+(v)&=\log(g_1v),
  &
  \Lambda_{k+1}^+(v)
  &=\log\!\left[\frac{g_{k+1}}{g_k}\Lambda_k^+(v)\right],
  \label{eq:Lambda-plus}
  \\
  \Lambda_1^-(u)&=\log(-g_1u),
  &
  \Lambda_{k+1}^-(u)
  &=\log\!\left[\frac{g_{k+1}}{g_k}\Lambda_k^-(u)\right].
  \label{eq:Lambda-minus}
\end{align}
The superscripts $+$ and $-$ label the two null directions. At level $k$
these functions are used only where all preceding logarithms are positive,
and
\begin{equation}
  v_k=\frac{1}{g_k}\Lambda_k^+(v),
  \qquad
  u_k=-\frac{1}{g_k}\Lambda_k^-(u).
  \label{eq:level-coordinates}
\end{equation}

Define
\begin{equation}
  F_k(y)=\frac1{g_k}\e^{g_ky},
  \qquad
  \ell_0=0,
  \qquad
  \ell_n=(F_1\circ F_2\circ\cdots\circ F_n)(0).
  \label{eq:horizon-locations}
\end{equation}
Thus $\ell_n$ is the location of the $n$th recursive horizon. Equivalently,
$\ell_n$ is the unique zero of $\Lambda_n^+$, while $-\ell_n$ is the unique
zero of $\Lambda_n^-$. The nested regions are
\begin{equation}
  W_{n+1}=\{v>\ell_n,\ u<-\ell_n\}
         =\{x-\ell_n>|t|\},
  \label{eq:Wn-region}
\end{equation}
\begin{equation}
  D_n=\{\ell_{n-1}<v<\ell_n,\ -\ell_n<u<-\ell_{n-1}\},
  \qquad n\geq1.
  \label{eq:Dn-region}
\end{equation}

The chain rule gives the level-$(n+1)$ child flow
\begin{equation}
  \zeta_{n+1}=g_{n+1}\left[
    v\prod_{j=1}^{n}\Lambda_j^+(v)\,\partial_v
    -u\prod_{j=1}^{n}\Lambda_j^-(u)\,\partial_u
  \right].
  \label{eq:recursive-flow}
\end{equation}
Its apparent nonlinearity is purely a coordinate effect. Set
\begin{equation}
  y_n^+(v)=\Lambda_n^+(v)=g_nv_n,
  \qquad
  y_n^-(u)=\Lambda_n^-(u)=-g_nu_n.
  \label{eq:linearizing-coordinates}
\end{equation}
Repeated differentiation gives
\begin{equation}
  \frac{\dd\Lambda_n^+}{\dd v}
  =\frac{1}{v\prod_{j=1}^{n-1}\Lambda_j^+(v)},
  \qquad
  \frac{\dd\Lambda_n^-}{\dd u}
  =\frac{1}{u\prod_{j=1}^{n-1}\Lambda_j^-(u)}.
  \label{eq:Lambda-derivatives}
\end{equation}
Substituting into \eqref{eq:recursive-flow}, all preceding logarithms cancel:
\begin{equation}
  \zeta_{n+1}y_n^+=g_{n+1}y_n^+,
  \qquad
  \zeta_{n+1}y_n^-=-g_{n+1}y_n^-.
  \label{eq:linearized-flow}
\end{equation}
Thus
\begin{equation}
  y_n^+\longmapsto\e^{g_{n+1}s}y_n^+,
  \qquad
  y_n^-\longmapsto\e^{-g_{n+1}s}y_n^-,
  \label{eq:linearized-boost}
\end{equation}
which is an ordinary Lorentz boost. This places the recursive family within the broader class of conformally
related geometric modular flows in two dimensions \cite{Caminiti:2025hjq}; the distinctive feature here is its closure under the
explicit exponential recursion.

\subsection{Clock normalization}
\label{subsec:clock-normalization}

For a fixed inclusion $W_n\supset W_{n+1}$, $g_{n+1}$ only normalizes the
child clock. It leaves the state and child region fixed while rescaling the
generator and KMS inverse temperature oppositely. If the same geometric flow
is reparametrized so that its generator is multiplied by $c>0$, the numerical
temperature is multiplied by $c$ and the inverse temperature divided by $c$.
The invariant content is the state--flow pairing and its KMS relation, not a
numerical temperature without a clock convention.

In the full recursion, the same parameter becomes part of the next parent
frame and hence helps locate $W_{n+2}$. For $n\geq1$, the transported state
$\omega_n$ is independent of the terminal scale $g_n$. Holding
$g_1,\ldots,g_{n-1}$ fixed and replacing $g_n$ by $g_n'$ gives
\begin{equation}
  v_n'=\frac{g_n}{g_n'}v_n
       +\frac{1}{g_n'}\log\frac{g_n'}{g_n},
  \qquad
  u_n'=\frac{g_n}{g_n'}u_n
       -\frac{1}{g_n'}\log\frac{g_n'}{g_n}.
  \label{eq:terminal-scale-change}
\end{equation}
This is a global reference dilation followed by a spatial translation, under which the conformal vacuum is invariant.
Thus changing the terminal scale $g_n$ moves the next recursive cut without changing the transported state $\omega_n$.

\section{Explicit KMS Verification on the Child Wedge}
\label{sec:free-current}

We now verify the first nontrivial child-wedge KMS property directly for a
two-dimensional free massless scalar. Write
\begin{equation}
  \phi(t_1,x_1)=\phi_+(v_1)+\phi_-(u_1).
  \label{eq:chiral-decomposition}
\end{equation}
The undifferentiated scalar has an infrared zero-mode ambiguity in two
dimensions, so we use the chiral currents
\begin{equation}
  j_{1,+}(v_1)=\partial_{v_1}\phi_+(v_1),
  \qquad
  j_{1,-}(u_1)=\partial_{u_1}\phi_-(u_1).
  \label{eq:chiral-currents}
\end{equation}
Equivalently, one may use the current Weyl algebra or smear the scalar only
with neutral test functions \cite{Derezinski:2004xz}. By
\eqref{eq:omega-def}, $\omega_R=\omega_1$ is the reference Minkowski vacuum
in the $(t_1,x_1)$ frame. Its nonzero chiral two-point distributions are
\begin{align}
  \omega_R\!\left(j_{1,+}(v_1)j_{1,+}(v_1')\right)
  &=-\frac{1}{4\pi}\frac{1}{(v_1-v_1'-\ii0)^2},
  \label{eq:parent-plus-current}
  \\
  \omega_R\!\left(j_{1,-}(u_1)j_{1,-}(u_1')\right)
  &=-\frac{1}{4\pi}\frac{1}{(u_1-u_1'-\ii0)^2},
  \label{eq:parent-minus-current}
\end{align}
with vanishing mixed correlator. The $\ii0$ prescriptions specify
distributional boundary values and encode the operator ordering.

On $W_2$ use
\begin{equation}
  v_1=f(v_2)=\frac1{g_2}\e^{g_2v_2},
  \qquad
  u_1=h(u_2)=-\frac1{g_2}\e^{-g_2u_2},
  \qquad
  v_2=t_2+x_2,
  \quad u_2=t_2-x_2.
  \label{eq:second-map-free}
\end{equation}
Because the currents have conformal weight one,
\begin{equation}
  j_{2,+}(v_2)=f'(v_2)j_{1,+}(f(v_2)),
  \qquad
  j_{2,-}(u_2)=h'(u_2)j_{1,-}(h(u_2)).
  \label{eq:current-transform}
\end{equation}
Using
\begin{align}
  f(v_2)-f(v_2')
  &=\frac2{g_2}\e^{\frac{g_2}{2}(v_2+v_2')}
    \sinh\!\left[\frac{g_2}{2}(v_2-v_2')\right],
  \label{eq:f-difference}
  \\
  h(u_2)-h(u_2')
  &=\frac2{g_2}\e^{-\frac{g_2}{2}(u_2+u_2')}
    \sinh\!\left[\frac{g_2}{2}(u_2-u_2')\right],
  \label{eq:h-difference}
\end{align}
we obtain, as distributions,
\begin{align}
  \omega_R\!\left(j_{2,+}(v_2)j_{2,+}(v_2')\right)
  &=-\frac{g_2^2}{16\pi}
  \frac{1}{\sinh^2\!\left[\frac{g_2}{2}(v_2-v_2'-\ii0)\right]},
  \label{eq:child-plus-current}
  \\
  \omega_R\!\left(j_{2,-}(u_2)j_{2,-}(u_2')\right)
  &=-\frac{g_2^2}{16\pi}
  \frac{1}{\sinh^2\!\left[\frac{g_2}{2}(u_2-u_2'-\ii0)\right]}.
  \label{eq:child-minus-current}
\end{align}
These formulas hold for arbitrary points in $W_2$, not merely along one
fixed-$x_2$ orbit. With
\begin{equation}
  \Delta t_2=t_2-t_2',\qquad \Delta x_2=x_2-x_2',
  \label{eq:deltas}
\end{equation}

and $\partial_{t_2}\phi=j_{2,+}+j_{2,-}$,
\begin{align}
  &\omega_R\!\left(
    \partial_{t_2}\phi(t_2,x_2)
    \partial_{t_2}\phi(t_2',x_2')
  \right)
  \nonumber\\
  &\qquad=-\frac{g_2^2}{16\pi}\left\{
  \frac{1}{\sinh^2\!\left[\frac{g_2}{2}(\Delta t_2+\Delta x_2-\ii0)\right]}
  +
  \frac{1}{\sinh^2\!\left[\frac{g_2}{2}(\Delta t_2-\Delta x_2-\ii0)\right]}
  \right\}.
  \label{eq:full-child-correlation}
\end{align}
At $x_2=x_2'$ this reduces to
\begin{equation}
  \omega_R\!\left(
    \partial_{t_2}\phi(t_2,x_2)
    \partial_{t_2}\phi(t_2',x_2)
  \right)
  =-\frac{g_2^2}{8\pi}
  \frac{1}{\sinh^2\!\left[\frac{g_2}{2}(\Delta t_2-\ii0)\right]}.
  \label{eq:orbit-correlation}
\end{equation}

\paragraph{Analytic strip and operator ordering}
\label{subsec:analytic-strip}
Set
\begin{equation}
  \beta_2=\frac{2\pi}{g_2}.
  \label{eq:beta2}
\end{equation}

For fixed $x_2$ and $x_2'$, define the complex-time function
\begin{equation}
  C_{x_2,x_2'}(z)=-\frac{g_2^2}{16\pi}\left[
  \frac{1}{\sinh^2\!\left(\frac{g_2}{2}(z+\Delta x_2)\right)}
  +\frac{1}{\sinh^2\!\left(\frac{g_2}{2}(z-\Delta x_2)\right)}
  \right],
  \label{eq:current-complex-time-function}
\end{equation}
where $\Delta x_2=x_2-x_2'$ and the boundary-value prescription has been temporarily omitted. The possible singularities
occur when
\[
  \frac{g_2}{2}(z\pm\Delta x_2)=\ii\pi n,\qquad n\in\mathbb{Z},
\]
or equivalently
\begin{equation}
  z=\mp\Delta x_2+\ii n\beta_2.
  \label{eq:current-correlation-poles}
\end{equation}
Hence there are no singularities in the open strip
\begin{equation}
  -\beta_2<\operatorname{Im}z<0.
  \label{eq:lower-strip}
\end{equation}
The boundary lines themselves contain the singularities of the distributional correlation function and are
understood through the appropriate boundary-value prescriptions.

Let
\begin{equation}
  O(t_2,x_2)=\partial_{t_2}\phi(t_2,x_2).
  \label{eq:current-time-observable}
\end{equation}
Approaching the upper boundary of the strip from below, we set $z=t-\ii\epsilon$ with $\epsilon>0$. Then
\begin{equation}
  \lim_{\epsilon\downarrow0}C_{x_2,x_2'}(t-\ii\epsilon)
  =\omega_R\!\left(O(t,x_2)O(0,x_2')\right),
  \label{eq:upper-boundary}
\end{equation}
where time-translation invariance has been used to set the second time coordinate to zero. Thus the upper
boundary reproduces the original Wightman ordering.

For the lower boundary, approach the line $\operatorname{Im}z=-\beta_2$ from within the strip:
\[
  z=t-\ii\beta_2+\ii\epsilon,\qquad \epsilon>0.
\]
Since
\[
  \frac{g_2\beta_2}{2}=\pi,
\]
each hyperbolic sine in \eqref{eq:current-complex-time-function} is shifted by $-\ii\pi$. Using
\[
  \sinh(w-\ii\pi)=-\sinh w,
\]
and noting that the square removes the minus sign, one obtains
\begin{equation}
  \lim_{\epsilon\downarrow0}C_{x_2,x_2'}(t-\ii\beta_2+\ii\epsilon)
  =\omega_R\!\left(O(0,x_2')O(t,x_2)\right).
  \label{eq:lower-boundary}
\end{equation}

The crucial point is that the shift by $-\ii\beta_2$ returns the same meromorphic function but changes the boundary
prescription from $t-\ii0$ to $t+\ii0$. This corresponds to the reversal of the Wightman operator ordering.

Equations \eqref{eq:upper-boundary} and \eqref{eq:lower-boundary}, together with analyticity in the open strip, are precisely the lower-strip KMS
boundary conditions \cite{Haag:1967sg,Summers:2003tf}. Thus the explicit two-point function satisfies the KMS condition with respect to
$t_2$-translations, with
\begin{equation}
  \beta_2=\frac{2\pi}{g_2},\qquad T_2=\frac{g_2}{2\pi}.
  \label{eq:free-KMS-temperature}
\end{equation}
The displayed currents are unbounded operator-valued distributions, so the fully algebraic statement is obtained
after smearing them with test functions and passing to the corresponding current Weyl algebra. The standard
quasi-free KMS criterion then promotes the above two-point function result to the bounded local Weyl algebra
\cite{Bratteli:1979tw}.

After smearing with compactly supported real test functions in $W_2$,
\eqref{eq:child-plus-current} and \eqref{eq:child-minus-current} determine
the covariance of a quasi-free state on the local CCR/Weyl algebra generated
by the smeared chiral currents. Let $W(f)$ denote the corresponding Weyl
operators, and let $\alpha_s^{(2)}$ be the automorphism induced by
$t_2$-translation. The smeared covariance is stationary under
$\alpha_s^{(2)}$ and has the analytic continuation and boundary ordering in
\eqref{eq:upper-boundary}--\eqref{eq:lower-boundary}.

By the standard quasi-free KMS criterion for CCR algebras
\cite{Bratteli:1996xq}, this covariance condition is equivalent to the
KMS condition for the associated bounded Weyl algebra. For compactly
supported $f$ and $g$, the real-time function
\begin{equation}
  F_{f,g}(t):=
  \omega_R\!\left(\alpha_t^{(2)}(W(f))W(g)\right),
  \qquad t\in\R,
  \label{eq:Weyl-real-time-function}
\end{equation}
has a bounded analytic continuation to
$-\beta_2<\operatorname{Im}z<0$ and satisfies
\begin{equation}
  F_{f,g}(t-\ii\beta_2)
  =\omega_R\!\left(W(g)\alpha_t^{(2)}(W(f))\right).
  \label{eq:Weyl-KMS-boundary}
\end{equation}
All $\ii0$ prescriptions denote distributional boundary values
and are not finite regulators transported literally through the exponential
map.

\par\medskip\noindent\hspace{\parindent}\textbf{Conformal time is not fixed-orbit proper time}\par
\label{subsec:proper-time}
After two transformations,
\begin{equation}
  \dd s^2=
  \e^{2g_1x_1(t_2,x_2)+2g_2x_2}
  (-\dd t_2^2+\dd x_2^2),
  \qquad
  x_1(t_2,x_2)=\frac{\e^{g_2x_2}}{g_2}\cosh(g_2t_2).
  \label{eq:second-metric}
\end{equation}
Along a fixed-$x_2$ curve,
\begin{equation}
  \dd\tau=\e^{g_1x_1(t_2,x_2)+g_2x_2}\,\dd t_2,
  \label{eq:proper-time-lapse}
\end{equation}
and the factor is time dependent. Hence $\partial_{t_2}\phi$ is not the
proper-time derivative $\dd\phi/\dd\tau$, and a Fourier frequency conjugate
to $t_2$ is not automatically a detector energy gap. Kolekar and
Padmanabhan explicitly noted this distinction and considered a nonconstant-
$x_2$ trajectory for which $t_2$ may be chosen as proper time
\cite{Kolekar:2013hra}. More recent analyses treat
trajectory-dependent and asymptotic detector responses
\cite{Sahota:2024thk,Dubey:2025hab}.

The exact conclusion is therefore state-level KMS thermality with respect to
the child conformal/modular clock. A proper-time detector temperature is a
separate operational statement. Since the lapse in
\eqref{eq:proper-time-lapse} varies along a fixed-$x_2$ trajectory, no
stationary Tolman redshift converts $g_2/(2\pi)$ into a single detector
temperature on that trajectory.
\section{Recursive Bisognano--Wichmann modularity}
\label{sec:recursive-bw}

Section~\ref{sec:geometry} constructed the recursive family of parent and child regions, together with the conformal transport of
the reference vacuum to each parent. We now use that construction to establish the corresponding modular
statement. The basic mechanism is simple: the Bisognano--Wichmann property is applied to the standard right
wedge in the reference Minkowski plane, and the resulting state--flow pairing is transported to the recursive child.
The recursive KMS property therefore follows from ordinary wedge modularity together with the conformal
transport.

\subsection{Quasilocal and GNS Transport}
\label{subsec:quasilocal-transport}

The recursive construction of Section~\ref{sec:geometry} was formulated
first at the level of quasilocal $C^*$-algebras. We now pass to the
von Neumann algebras and GNS representations needed for modular theory.

For an open region $R$, the quasilocal algebra is
\begin{equation}
  \A(R)=\overline{\bigcup_{O\subset R}\A(O)},
  \label{eq:quasilocal-algebra}
\end{equation}
where the union runs over bounded causal diamonds $O$ whose compact
closures lie inside $R$, and the overline denotes closure in the operator
norm. For a state $\omega$ with GNS triple
$(\pi_\omega,\Hh_\omega,\Omega_\omega)$, we define
\begin{equation}
  M_\omega(R):=
  \overline{\pi_\omega(\A(R))}^{\,\mathrm{wot}}.
  \label{eq:von-neumann-closure}
\end{equation}

By the local conformal covariance and coherence assumptions stated in
Appendix~\ref{app:conformal-net}, the local transport extends to the
quasilocal isomorphism
\begin{equation}
  \Phi_n:\A(M_{\mathrm{ref}})\longrightarrow\A(W_n),
  \qquad
  \Phi_n\!\left(\A(R_n)\right)=\A(W_{n+1}).
  \label{eq:Phi-child}
\end{equation}
The state introduced in Section~\ref{sec:geometry} is
\begin{equation}
  \omega_n=\omega_{\mathrm{ref}}\circ\Phi_n^{-1}.
  \label{eq:v2-transported-state-restatement}
\end{equation}

The corresponding GNS representations are related by the canonical
unitary
\begin{equation}
  V_n:\Hh_{\mathrm{ref}}\longrightarrow\Hh_{\omega_n},
\end{equation}
defined on the dense set of GNS vectors by
\begin{equation}
  V_n\pi_{\mathrm{ref}}(A)\Omega_{\mathrm{ref}}
  =
  \pi_{\omega_n}\!\left(\Phi_n(A)\right)\Omega_{\omega_n}.
  \label{eq:v2-reference-gns-unitary}
\end{equation}
The well-definedness and unitary extension of this map are given in
Appendix~\ref{subsec:GNS-transport}.

At the von Neumann level, the same unitary implements the transport
\begin{equation}
  \Theta_n:=\operatorname{Ad}V_n:
  M_{\omega_{\mathrm{ref}}}(R_n)
  \longrightarrow
  M_{\omega_n}(W_{n+1}),
  \qquad
  \Theta_n(X)=V_nXV_n^*.
  \label{eq:Theta-n}
\end{equation}
Consequently, whenever the reference wedge is in standard form, its
Tomita--Takesaki objects are transported to the child algebra as
\begin{equation}
  S_n=V_nS_R^{\mathrm{ref}}V_n^*,
  \qquad
  \Delta_{n\to n+1}
  =V_n\Delta_R^{\mathrm{ref}}V_n^*,
  \qquad
  J_n=V_nJ_R^{\mathrm{ref}}V_n^*.
  \label{eq:v2-modular-objects-transport}
\end{equation}
The general GNS and modular-transport statements are collected in
Appendix~\ref{subsec:GNS-transport}.

Finally, the reference boost induces an automorphism group on the reference
wedge algebra, denoted by $B_s^{(n+1)}$. Its transport to the child algebra is
\begin{equation}
  \alpha_s^{(n+1)}
  :=
  \Phi_n\circ B_s^{(n+1)}\circ\Phi_n^{-1}.
  \label{eq:transported-child-group}
\end{equation}
This formulation transports the dynamics at the level of automorphism
groups rather than applying $\Phi_n$ directly to an unbounded generator.
The Bisognano--Wichmann property can therefore be transported from the
reference wedge to the child algebra, with the KMS convention and physical
clock normalization fixed as in Appendix~\ref{subsec:tomita-physical-time}.

\subsection{Recursive Bisognano--Wichmann theorem}
\label{subsec:main-theorem}

\paragraph{Standing assumptions.}
We work with a bosonic, local, positive-energy, diffeomorphism-covariant two-dimensional conformal net. We assume that the conformal maps used in the recursive construction admit
coherent local implementation on the net, so that they induce the quasilocal isomorphisms used above. We
also assume the vacuum Bisognano--Wichmann property for the standard wedges in the reference Minkowski
plane. Wedge duality is not needed for the KMS statement below; it will be used later to identify the causally
complementary sibling with the modular commutant in Section~\ref{sec:recursive-bw}.

These assumptions, rather than the coordinate recursion by itself, define the class of theories to which the
theorem applies. Examples include vacuum-sector tensor products of diffeomorphism-covariant chiral conformal
nets and suitable local two-dimensional extensions satisfying the required covariance and duality properties
\cite{Bisognano:1975ih,Bisognano:1976za,Brunetti:1992zf}. The detailed assumptions concerning local covariance, transport, and the Bisognano--Wichmann
property are collected in Appendix~\ref{app:conformal-net}.

\begin{theorem}
\label{thm:recursive-bw}
Let the net satisfy the standing assumptions, and let $(W_n,\chi_n)$ be a framed parent with transported
state $\omega_n$ defined by \eqref{eq:omega-def}. Then

\begin{equation}
  \left.\omega_n\right|_{\A(W_{n+1})}
  \text{ is KMS for }\alpha_s^{(n+1)}
  \quad\text{at}\quad
  \beta_{n+1}=\frac{2\pi}{g_{n+1}}.
  \label{eq:recursive-bw-statement}
\end{equation}
\end{theorem}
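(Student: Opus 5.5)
The plan is to reduce everything to the reference plane and pull it back along $\Phi_n$. By the Bisognano--Wichmann assumption, the restriction of $\omega_{\mathrm{ref}}$ to $\A(R_n)$ is KMS for the boost automorphism group $B_s^{(n+1)}$. The boost $(v_n,u_n)\mapsto(\e^{g_{n+1}s}v_n,\e^{-g_{n+1}s}u_n)$ equals the unit-rapidity boost at parameter $g_{n+1}s$. The modular group $\sigma_t$ of $(M_{\omega_{\mathrm{ref}}}(R_n),\Omega_{\mathrm{ref}})$ is therefore $B^{(n+1)}_{\mp 2\pi t/g_{n+1}}$, with the sign fixed by the convention of Appendix~\ref{subsec:tomita-physical-time}. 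Since $\omega_{\mathrm{ref}}$ is KMS at $\beta=-1$ for its modular group, it is KMS for the future-directed flow $B_s^{(n+1)}$ at $\beta_{n+1}=2\pi/g_{n+1}$. I will state this cleanly once, as the "reference" case $n=0$ with $\Phi_0=\id$.

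The second step is the transport. I would take $A,B\in\A(R_n)$ and use \eqref{eq:transported-child-group} together with \eqref{eq:transported-observable-identity}:
\[
\omega_n\bigl(\Phi_n(B)\,\alpha^{(n+1)}_s(\Phi_n(A))\bigr)=\omega_n\bigl(\Phi_n(B\,B^{(n+1)}_s(A))\bigr)=\omega_{\mathrm{ref}}\bigl(B\,B^{(n+1)}_s(A)\bigr),
\]
and similarly for the opposite ordering. The KMS function $F_{A,B}$ on the reference wedge, with its bounded analytic continuation to the strip of width $\beta_{n+1}$ and its boundary values, is then literally the KMS function for the pair $\Phi_n(A),\Phi_n(B)$. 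Because $\Phi_n(\A(R_n))=\A(W_{n+1})$ by \eqref{eq:Phi-child}, every pair in the child algebra arises this way. This proves \eqref{eq:recursive-bw-statement} at the $C^*$ level. Equivalently, at the von Neumann level, \eqref{eq:v2-modular-objects-transport} gives $\Delta_{n\to n+1}^{\ii t}=V_n(\Delta^{\mathrm{ref}}_R)^{\ii t}V_n^*$, so $\operatorname{Ad}\Delta^{\ii t}_{n\to n+1}$ coincides with the weak extension of $\alpha^{(n+1)}_{\mp 2\pi t/g_{n+1}}$.

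The main obstacle is not the algebra but ensuring that $\alpha^{(n+1)}_s$ is a well-defined automorphism group of $\A(W_{n+1})$ compatible with the GNS transport. Concretely, $B^{(n+1)}_s$ must map reference diamonds in $R_n$ to reference diamonds in $R_n$, which it does because it is a Poincaré boost. Its conjugate by $\Phi_n$ must then act geometrically, sending $\A(O)$ to $\A(\chi_n B_s\chi_n^{-1}O)$. This is where I would invoke the coherence assumptions of Appendix~\ref{app:conformal-net}: local implementations of $\chi_n$ compose consistently with the reference boost, so the composite in \eqref{eq:transported-child-group} is norm-continuous on local algebras and extends to the quasilocal algebra.

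A secondary point is to check that $V_n\Omega_{\mathrm{ref}}=\Omega_{\omega_n}$ and that $\Omega_{\omega_n}$ is cyclic and separating for $M_{\omega_n}(W_{n+1})$. Both follow from the unitarity of $V_n$ (Appendix~\ref{subsec:GNS-transport}) and Reeh--Schlieder for the reference wedge. Finally, I would remark that the geometric generator of $\alpha^{(n+1)}$ is $\zeta_{n+1}$ in \eqref{eq:recursive-flow}, linearized by \eqref{eq:linearized-boost}, which identifies the child clock.
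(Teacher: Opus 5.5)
Your proposal is correct and follows essentially the same route as the paper. Bisognano--Wichmann gives the KMS property of $\omega_{\mathrm{ref}}|_{\A(R_n)}$ for the reference boost, this is carried across the single isomorphism $\Phi_n$ that defines both $\omega_n$ and $\alpha_s^{(n+1)}$, and $\beta_{n+1}=2\pi/g_{n+1}$ follows from the rapidity normalization $g_{n+1}s$. Your explicit correlator identity and the von Neumann-level transport of $\Delta$ just spell out the paper's ``KMS is invariant under isomorphisms'' step. The geometric-action issue you call the main obstacle is already covered by the standing assumption that $\Phi_n$ is a quasilocal isomorphism; it matters for identifying $\alpha_s^{(n+1)}$ with the flow of $\zeta_{n+1}$, not for the KMS claim.
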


Equivalently, the restriction of the transported parent state to the next right child is thermal with respect to the
transported child boost.

\begin{proof}
Consider the reference Minkowski plane $M_{\mathrm{ref}}$ associated with the framed parent $(W_n,\chi_n)$. By the
Bisognano--Wichmann property, the reference vacuum restricted to the standard right-wedge algebra $\A(R_n)$ is a
KMS state for the reference boost $B_s^{(n+1)}$.

The child region is the image of this reference wedge,
\[
  W_{n+1}=\chi_n(R_n),
\]
and the corresponding algebra is related by
\[
  \A(W_{n+1})=\Phi_n\!\left(\A(R_n)\right).
\]

At the same time, the state on the parent and the child flow are obtained by transporting the reference state
and reference boost through the same isomorphism:
\begin{equation}
  \omega_n=\omega_{\mathrm{ref}}\circ\Phi_n^{-1},
  \qquad
  \alpha_s^{(n+1)}=\Phi_n\circ B_s^{(n+1)}\circ\Phi_n^{-1}.
  \label{eq:v2-state-flow-common-transport}
\end{equation}

The KMS condition is invariant under a $C^*$-algebra isomorphism. Hence the KMS property of the reference
vacuum on $\A(R_n)$ transfers directly to
\[
  \omega_n|_{\A(W_{n+1})}
\]
with respect to $\alpha_s^{(n+1)}$.

The reference boost parameter is the dimensionless rapidity $g_{n+1}t_{n+1}$. Since the Bisognano--Wichmann inverse
temperature is $2\pi$ when expressed in this rapidity parameter, the inverse temperature with respect to the child
coordinate time $t_{n+1}$ is
\[
  \beta_{n+1}=\frac{2\pi}{g_{n+1}}.
\]
This proves the theorem.
\end{proof}

\subsection{Transported standard form, purity, and representation}
\label{subsec:transported-standard-form}

If the reference vacuum representation is irreducible,
$\omega_{\mathrm{ref}}$ is pure, and this purity is preserved under the
$C^*$-isomorphisms used in the recursive construction. Hence, for $n\geq1$,
the transported state $\omega_n$ is pure on the parent quasilocal algebra
$\A(W_n)$. This purity refers to the transported parent state; its
restriction to the child algebra is the state whose KMS property is
established below.

For every causal diamond $O$ whose compact closure lies inside $W_n$, a
circle-diffeomorphism extension of the chart implements
$\omega_n|_{\A(O)}$ as a vector state in the original vacuum representation,
so the transported state is locally normal. We do not need to classify the
global representation of the full parent wedge for the modular statements
below.

All modular objects below therefore refer to the transported standard form
\begin{equation}
  \bigl(M_{\omega_n}(W_{n+1}),\Omega_{\omega_n}\bigr)
  \quad\text{on}\quad \Hh_{\omega_n}.
  \label{eq:transported-standard-pair}
\end{equation}

The modular flow and state considered here are understood within this
specified transported GNS representation. In particular, the pairing of
a state with its modular dynamics is representation-dependent: across
inequivalent representations or sectors, the same geometric dynamics may
admit distinct KMS states
\cite{Camassa:2011te,Camassa:2011wk}.
Within the recursive construction, the linearizing chart selects the
transported state together with its GNS representation and normalized
modular flow.

\subsection{Modular conjugation and the sibling algebra}
\label{subsec:conjugation-sibling}

A prime on an operator algebra denotes its commutant:

\begin{equation}
  N':=\{X\in B(\Hh):[X,A]=0\ \text{for every }A\in N\}.
  \label{eq:commutant-definition}
\end{equation}

By the bicommutant theorem, the weak closure in
\eqref{eq:von-neumann-closure} also equals $\pi_\omega(\A(R))''$.

Let
\begin{equation}
  M_R^{\mathrm{ref}}:=M_{\omega_{\mathrm{ref}}}(R_n),
  \qquad
  M_L^{\mathrm{ref}}:=M_{\omega_{\mathrm{ref}}}(L_n).
  \label{eq:reference-MRL}
\end{equation}

For the conformal observable nets considered here, wedge duality gives
\begin{equation}
  (M_R^{\mathrm{ref}})'=M_L^{\mathrm{ref}}.
  \label{eq:reference-wedge-duality}
\end{equation}

This is the duality input associated with the conformal Bisognano--Wichmann/PCT structure \cite{Brunetti:1992zf}.

Transporting the two wedges to the $n$-th state gives

\begin{equation}
  M_R^n:=M_{\omega_n}(W_{n+1}),
  \qquad
  M_L^n:=M_{\omega_n}(D_n),
  \label{eq:MRML}
\end{equation}

and the transported wedge duality relation is

\begin{equation}
  M_L^n=(M_R^n)'.
  \label{eq:transported-wedge-duality}
\end{equation}

If $J_n$ denotes the modular conjugation of the standard pair $(M_R^n,\Omega_{\omega_n})$, Tomita--Takesaki theory then gives

\begin{equation}
  J_nM_R^nJ_n=(M_R^n)'=M_L^n.
  \label{eq:modular-purifier}
\end{equation}
Thus modular conjugation maps the right-child algebra onto its commutant, while wedge duality identifies that
commutant geometrically with the algebra of the complementary sibling $D_n$. By contrast, locality alone gives
only the inclusion

\begin{equation}
  M_{\omega_n}(D_n)\subseteq M_{\omega_n}(W_{n+1})'.
  \label{eq:locality-inclusion}
\end{equation}

Related geometric interpretations of commutants beyond thermal horizons were developed in Ref.~\cite{Schroer:1998pj}.

The geometric action is explicit at the first nested step. In the reference $(t_1,x_1)$ plane, modular conjugation
contains the wedge reflection

\begin{equation}
  (t_1,x_1)\longmapsto(-t_1,-x_1),
  \qquad
  (v_1,u_1)\longmapsto(-v_1,-u_1),
  \label{eq:wedge-reflection}
\end{equation}

 together with the theory-dependent antiunitary action on internal field labels. Using the first Rindler chart
\eqref{eq:first-rindler-chart}, this reflection becomes, in the original coordinates,

\begin{equation}
  v\longmapsto\frac1{g_1^2v},
  \qquad
  u\longmapsto\frac1{g_1^2u}.
  \label{eq:geometric-J-action}
\end{equation}

For $v>1/g_1$, the image satisfies $0<v'<1/g_1$, while $u<-1/g_1$ maps to $-1/g_1<u'<0$. Hence the
transformation exchanges $W_2$ with $D_1$. Since it is an involution, applying it twice returns the original point.

Equation~\eqref{eq:modular-purifier} gives the exact continuum modular-commutant relation underlying the purification interpretation.
For type-III local algebras, however, this relation does not imply an ordinary Hilbert-space tensor-product
factorization. If the relevant inclusion satisfies the split property, one may introduce an intermediate type-I
factor and a corresponding tensor-product representation, for example by separating bounded causal-diamond
approximants by a nonzero collar \cite{Doplicher:1984zz,Witten:2018zxz}. In such a split-regulated representation, the parent vector can be
written in Schmidt form,

\begin{equation}
  |0\rangle_{W_n}
  \sim\sum_\alpha\sqrt{\lambda_\alpha}
  |\alpha\rangle_{W_{n+1}}|\alpha\rangle_{D_n},
  \qquad
  \lambda_\alpha\geq0,
  \quad
  \sum_\alpha\lambda_\alpha=1.
  \label{eq:split-Schmidt}
\end{equation}

The symbol $\sim$ emphasizes both the regulator and the antiunitary
identification of the partner basis.

The Schmidt weights may be parametrized as
\begin{equation}
  \lambda_\alpha=Z^{-1}\e^{-\beta_{n+1}E_\alpha},
  \label{eq:v2-regulated-Schmidt-weights}
\end{equation}

thereby defining regulated entanglement energies. This parametrization does not by itself identify the $E_\alpha$ with
the spectrum of a physical child-time Hamiltonian; such an identification requires additional dynamical input.
Thus the thermofield interpretation is a split-regulated realization of the exact modular-commutant pairing \cite{Israel:1976ur},
rather than a tensor-product factorization of the underlying type-III algebras.

The two child regions alone do not determine the parent state:
the corresponding algebra embeddings, representation, and state vector
are also part of the modular construction.

\section{Modular generators, horizon behavior, and stress profiles}
\label{sec:generators}

\subsection{Recursive Modular Generators}
\label{subsec:one-sided-generator}

There are two related quantities that are often referred to as the modular Hamiltonian. The exact modular
generator on the full GNS Hilbert space is the modular Liouvillean

\begin{equation}
  L_{n\to n+1}:=-\log\Delta_{n\to n+1}.
  \label{eq:Liouvillean}
\end{equation}

It acts on the full GNS Hilbert space and is not, in general, an operator belonging to the right-child algebra $M_R^n$.

We denote by $K^R_{n\to n+1}$ the corresponding one-sided stress-tensor charge. Its commutator gives the modular
flow on suitable right-child observables. In the continuum theory, where local algebras are type III, $K^R_{n\to n+1}$
should be understood as a quadratic form or as a derivation rather than as an operator affiliated with $M_R^n$. If a
split regularization is introduced, it can instead be represented as an operator on the right type-I factor. In that
setting one may write schematically

\begin{equation}
  L_{n\to n+1}=K^R_{n\to n+1}-K^L_{n\to n+1},
  \qquad
  K^L_{n\to n+1}\ \text{affiliated with }(M_R^n)'.
  \label{eq:L-KR-KL}
\end{equation}

The commutant term does not affect the adjoint action on $M_R^n$. Thus, the stress-tensor expressions below
represent the one-sided modular generator, while the exact continuum implementation is the Liouvillean $L_{n\to n+1}$.

In the reference level-$n$ coordinates, the right-wedge generator at $t_n=0$ is

\begin{equation}
  K^R_{\mathrm{ref}}
  =2\pi\int_0^\infty\dd x_n\,x_n
  T^{\mathrm{ref}}_{t_nt_n}(0,x_n),
  \label{eq:reference-KR-tx}
\end{equation}

Using
\[
  v_n=x_n,\qquad u_n=-x_n\qquad(t_n=0),
\]
this can equivalently be written in null coordinates as

\begin{equation}
  K^R_{\mathrm{ref}}
  =2\pi\int_0^\infty\dd v_n\,v_nT^{\mathrm{ref}}_{v_nv_n}
   +2\pi\int_{-\infty}^0\dd u_n\,(-u_n)T^{\mathrm{ref}}_{u_nu_n}.
  \label{eq:reference-KR-null}
\end{equation}

The formulas below use the state-subtracted operator part of the stress
tensor in the transported representation.

In the following, we keep only the operator (tensorial) part of the transformed stress tensor. Under a conformal
transformation, the stress tensor also acquires an anomalous Schwarzian term. This term is a $c$-number, so it
does not affect the commutator with right-child observables. We absorb its contribution to the charge into $C_n$.
Since this contribution is a $c$-number, it does not affect the adjoint action on right-child observables.

For $n\geq1$, the recursive coordinate transformation gives

\begin{equation}
  \frac{\dd v}{\dd v_n}
  =g_nv\prod_{j=1}^{n-1}\Lambda_j^+(v),
  \qquad
  \frac{\dd u}{\dd u_n}
  =-g_nu\prod_{j=1}^{n-1}\Lambda_j^-(u).
  \label{eq:inverse-jacobians}
\end{equation}

For the tensorial part of the stress tensor, the null-null component transforms with two Jacobian factors,
\begin{equation}
  T^{\mathrm{ref}}_{v_nv_n}
  =\left(\frac{\dd v}{\dd v_n}\right)^2T_{vv},
  \label{eq:stress-tensor-jacobian}
\end{equation}
and similarly for the $u$ component. Since the charge also contains the measure $\dd v_n$, one of these Jacobian factors
is cancelled when we change the integration variable from $v_n$ to $v$. Thus
\begin{equation}
  v_nT^{\mathrm{ref}}_{v_nv_n}\,\dd v_n
  =v_n\frac{\dd v}{\dd v_n}T_{vv}(v)\,\dd v.
  \label{eq:stress-measure-jacobian}
\end{equation}
Using
\begin{equation}
  v_n=\frac1{g_n}\Lambda_n^+(v)
  \label{eq:stress-null-log}
\end{equation}
together with Eq.~\eqref{eq:inverse-jacobians}, we obtain

\begin{equation}
  v_nT^{\mathrm{ref}}_{v_nv_n}\,\dd v_n
  =v\prod_{j=1}^{n}\Lambda_j^+(v)T_{vv}(v)\,\dd v,
  \label{eq:stress-transform-plus}
\end{equation}

with the analogous relation for the $u$ component.

Therefore,

\begin{align}
  K^R_{n\to n+1}
  &=2\pi\int_{\ell_n}^{\infty}\dd v\,
    v\prod_{j=1}^{n}\Lambda_j^+(v)T_{vv}(v)
  \nonumber\\
  &\quad+2\pi\int_{-\infty}^{-\ell_n}\dd u\,
    (-u)\prod_{j=1}^{n}\Lambda_j^-(u)T_{uu}(u)
  +C_n.
  \label{eq:recursive-KR}
\end{align}

Here $C_n$ collects the $c$-number contribution from the anomalous Schwarzian term, together with any additive
convention for the generator. At the first nested step,

\begin{align}
  K^R_{1\to2}
  &=2\pi\int_{1/g_1}^{\infty}\dd v\,
    v\log(g_1v)T_{vv}(v)
  \nonumber\\
  &\quad+2\pi\int_{-\infty}^{-1/g_1}\dd u\,
    (-u)\log(-g_1u)T_{uu}(u)+C_1.
  \label{eq:first-recursive-KR}
\end{align}

If $\Sigma$ is the $t_{n+1}=0$ Cauchy slice of the child wedge, with future-directed $\dd\Sigma^\nu$, the same result can be written
covariantly as

\begin{equation}
  K^R_{n\to n+1}
  =\frac{2\pi}{g_{n+1}}
  \int_\Sigma T_{\mu\nu}\zeta_{n+1}^\mu\,\dd\Sigma^\nu+C_n.
  \label{eq:covariant-KR}
\end{equation}

Thus $(g_{n+1}/2\pi)K^R_{n\to n+1}$ is the one-sided stress-tensor generator of physical child-time evolution at the quadratic-form or split-regulated level. The exact continuum implementation is instead given by $(g_{n+1}/2\pi)L_{n\to n+1}$. In a
split representation, their commutators with right-child observables agree because their difference is affiliated
with the commutant.

The local stress-tensor form places the recursive family within the broad class of geometric modular flows in
two-dimensional CFT. Constraints on geometric modular flow were studied
in Ref.~\cite{Sorce2024}, where its conformal character and
its future-directedness in weakly analytic states were
established under the assumptions considered there. General conformal-flow constructions, state reconstruction, and energy diagnostics were
developed in Ref.~\cite{Caminiti:2025hjq}. Here the distinctive result is the closed iterated-log family generated by a self-similar
sequence of child embeddings.

\subsection{Universal horizon linearization and the competing boost}
\label{subsec:competing-boost}

Near the child cut,

\begin{equation}
  v\prod_{j=1}^{n}\Lambda_j^+(v)
  =(v-\ell_n)+O\!\left((v-\ell_n)^2\right),
  \label{eq:near-cut-weight-plus}
\end{equation}

with the analogous result at $u=-\ell_n$. Thus, sufficiently close to the cut, the recursive modular weight becomes
the usual linear Rindler weight. The local modular flow therefore has the same form as the Bisognano--Wichmann
flow near the horizon \cite{Bisognano:1975ih,Bisognano:1976za,Cardy:2016fqc,Hislop:1981uh}. The iterated logarithms describe the global structure of the recursive flow,
rather than a new local horizon behavior.

Writing

\begin{equation}
  \zeta_{n+1}=A_{n+1}(v)\partial_v+B_{n+1}(u)\partial_u,
  \label{eq:zeta-components}
\end{equation}

the near-cut expansion gives

\begin{align}
  A_{n+1}(v)
  &=g_{n+1}(v-\ell_n)+O\!\left((v-\ell_n)^2\right),
  \label{eq:A-linearization}
  \\
  B_{n+1}(u)
  &=-g_{n+1}(u+\ell_n)+O\!\left((u+\ell_n)^2\right).
  \label{eq:B-linearization}
\end{align}

Hence the vector field vanishes at the bifurcation point $(v,u)=(\ell_n,-\ell_n)$. With the normalization $\zeta_{n+1}=\partial_{t_{n+1}}$,
the transverse linearized coefficient is

\begin{equation}
  \kappa_{n+1}=g_{n+1},
  \qquad
  T_{n+1}=\frac{\kappa_{n+1}}{2\pi}.
  \label{eq:kappa-temperature}
\end{equation}

This agrees with the conformal-Killing-horizon normalization \cite{Jacobson:1993pf} and provides a local consistency check of the
KMS temperature obtained from modular transport.

For comparison, the same translated wedge also admits the ordinary Minkowski Killing boost

\begin{equation}
  \xi_{n+1}^{\mathrm{iso}}
  =g_{n+1}\left[(v-\ell_n)\partial_v-(u+\ell_n)\partial_u\right].
  \label{eq:ordinary-translated-boost}
\end{equation}

It has the same linearization at the bifurcation point and the same $\kappa_{n+1}$, but its global vector field is different
from $\zeta_{n+1}$. The two flows are associated with different states:

\begin{equation}
  \left.\omega_M\right|_{\A(W_{n+1})}
  \longleftrightarrow\xi_{n+1}^{\mathrm{iso}},
  \qquad
  \left.\omega_n\right|_{\A(W_{n+1})}
  \longleftrightarrow\zeta_{n+1}.
  \label{eq:two-state-flow-pairings}
\end{equation}

Thus, agreement near the cut does not imply agreement of the full modular flows. The difference is global and is
encoded in the distinct conformal weights appearing in the two vector fields.

\subsection{Schwarzian stress profiles}
\label{subsec:Schwarzian-profiles}

The modular theorem identifies each state and its dynamics. A concise local
diagnostic is the stress tensor in a common coordinate frame. If a state is
vacuum with respect to $v_n(v)$, then
\begin{equation}
  \langle T_{vv}(v)\rangle_{\omega_n}
  =-\frac{c_+}{24\pi}\{v_n,v\},
  \qquad
  \{f,v\}:=\frac{f'''}{f'}-\frac32\left(\frac{f''}{f'}\right)^2,
  \label{eq:Schwarzian-transform}
\end{equation}
with the analogous $u$ expression. Iterating the Schwarzian chain rule gives
\begin{equation}
  \langle T_{vv}(v)\rangle_{\omega_n}
  =-\frac{c_+}{48\pi}
  \sum_{j=1}^{n}
  \frac{1}{\left[v\prod_{k=1}^{j-1}\Lambda_k^+(v)\right]^2},
  \label{eq:Schwarzian-plus-sum}
\end{equation}
\begin{equation}
  \langle T_{uu}(u)\rangle_{\omega_n}
  =-\frac{c_-}{48\pi}
  \sum_{j=1}^{n}
  \frac{1}{\left[(-u)\prod_{k=1}^{j-1}\Lambda_k^-(u)\right]^2},
  \label{eq:Schwarzian-minus-sum}
\end{equation}
where the empty product is one. These are standard anomalous conformal
transformations; general stress profiles for states with geometric modular
flow are analyzed in Ref.~\cite{Caminiti:2025hjq}, while
\eqref{eq:Schwarzian-plus-sum}--\eqref{eq:Schwarzian-minus-sum} display the
special recursive ancestry of the present family.

At level one,
\begin{equation}
  \langle T_{vv}\rangle_{\omega_R}
  =-\frac{c_+}{48\pi v^2},
  \qquad
  \langle T_{uu}\rangle_{\omega_R}
  =-\frac{c_-}{48\pi u^2}.
  \label{eq:level-one-stress}
\end{equation}
For a parity-symmetric CFT, $c_+=c_-=c$, and at $t=0$,
\begin{equation}
  \langle T_{tt}\rangle_{\omega_R}
  =-\frac{c}{24\pi x^2}
  =-\frac{\pi c}{6}T_{\mathrm{loc}}(x)^2,
  \qquad
  T_{\mathrm{loc}}(x)=\frac1{2\pi x}.
  \label{eq:Fulling-stress-temperature}
\end{equation}
The sign is negative because $\omega_R$ is the Fulling ground state, whereas
the Minkowski vacuum carries positive thermal energy relative to Rindler
time. This is the familiar negative, horizon-singular Fulling profile
\cite{Davies:1976ei}, analogous to the Boulware state near a
black-hole horizon \cite{Boulware:1974dm}. At level two,
\begin{equation}
  \langle T_{vv}(v)\rangle_{\omega_2}
  =-\frac{c_+}{48\pi v^2}
  \left[1+\frac1{\log^2(g_1v)}\right],
  \label{eq:level-two-stress}
\end{equation}
with the analogous $u$ expression.

The state $\omega_n$ lives on $W_n$, whose outer boundary is
$v=\ell_{n-1}$ or $u=-\ell_{n-1}$. Its newest Schwarzian term is singular
there, while earlier formal poles lie outside its physical domain and retain
the ancestry of the iterated map. The profile therefore demonstrates that the hierarchy is not a passive
coordinate relabelling of one vacuum: the states are vacuum-like in
different parent frames but have distinct stress profiles when expressed
in a common frame.

\section{One-generation shift: thermal parent and Rindler child}
\label{sec:one-generation}

The recursive construction admits an equivalent description obtained by shifting the root by one generation.
The key observation is that the region $W_n$, which is a child region in the original recursive construction, can
be regarded as the full reference Minkowski spacetime for the next step. Under this change of description, the
state $\omega_{n-1}$ restricted to $W_n$ becomes a thermal state on the full reference spacetime, while the next child $W_{n+1}$
becomes the ordinary right Rindler wedge $R_n$.

This shift therefore changes the way the same construction is viewed, without changing the underlying
state--algebraic system. The purpose of this section is to make this correspondence precise and to distinguish the
ordinary Rindler observer flow from the modular flow of the thermal state restricted to the Rindler wedge.

\subsection{The one-generation-shift identity}
\label{subsec:shift-identity}

Let $n\geq1$. We define the state on the full level-$n$ reference algebra by
\begin{equation}
  \tau_n
  :=\left(\left.\omega_{n-1}\right|_{\A(W_n)}\right)\circ\Phi_n.
  \label{eq:thermal-parent-state}
\end{equation}
Thus $\tau_n$ is obtained by transporting only the state restricted to the child $W_n$ back to the reference Minkowski
spacetime. This should be distinguished from transporting the entire state $\omega_{n-1}$
($\omega_{n-1}\circ\Phi_{n-1}=\omega_{\mathrm{ref}}$): the latter construction gives
the reference vacuum, whereas \eqref{eq:thermal-parent-state} gives the thermal state relevant for the one-generation shift.

Let $\vartheta_r^{(n)}$ denote the automorphism group induced by ordinary $t_n$-translations on the reference Minkowski
spacetime.
The algebra isomorphism $\Phi_n$ intertwines
$\vartheta_r^{(n)}$ with the level-$n$ parent-time automorphism group
on $\A(W_n)$; geometrically, the corresponding spacetime flows are related
by the conformal chart $\chi_n$.

Since, by the recursive Bisognano--Wichmann theorem, the state $\omega_{n-1}|_{\A(W_n)}$ is KMS with respect to this
parent-time flow, the transported state $\tau_n$ is KMS with respect to ordinary reference time translations.

\begin{proposition}[Thermal parent obtained by unwrapping a child]
\label{prop:thermal-parent}
The state $\tau_n$ is a $\beta_n$-KMS state for ordinary
reference time translations $\vartheta_r^{(n)}$, with
\begin{equation}
  \beta_n=\frac{2\pi}{g_n}.
  \label{eq:thermal-parent-beta}
\end{equation}
\end{proposition}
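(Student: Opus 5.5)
The plan is to reduce the statement to Theorem~\ref{thm:recursive-bw} applied at level $n-1$, and then to transport the KMS property back to the level-$n$ reference plane through $\Phi_n$. Theorem~\ref{thm:recursive-bw} with $n$ replaced by $n-1$ states that $\omega_{n-1}|_{\A(W_n)}$ is KMS for $\alpha_s^{(n)}=\Phi_{n-1}\circ B_s^{(n)}\circ\Phi_{n-1}^{-1}$ at $\beta_n=2\pi/g_n$, where $s$ is the $t_n$ clock. It therefore suffices to establish two things. First, that $\Phi_n$ intertwines $\vartheta_r^{(n)}$ with $\alpha_r^{(n)}$ on $\A(W_n)$:
\[
  \Phi_n\circ\vartheta_r^{(n)}=\alpha_r^{(n)}\circ\Phi_n .
\]
Second, that the KMS condition is invariant under a $C^*$-isomorphism. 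The second point is the same observation used in the proof of Theorem~\ref{thm:recursive-bw}. If $\omega$ is $\beta$-KMS for $\alpha$ on $\mathfrak B$ and $\Phi:\mathfrak A\to\mathfrak B$ is an isomorphism, then $\omega\circ\Phi$ is $\beta$-KMS for $\Phi^{-1}\alpha\Phi$, because the functions $\omega\circ\Phi(A\,\Phi^{-1}\alpha_t\Phi(B))=\omega(\Phi(A)\alpha_t(\Phi(B)))$ coincide with the original ones and so inherit the same strip analyticity and boundary values.

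The intertwining is checked geometrically first. By \eqref{eq:chi-composite}, $\chi_n=\chi_{n-1}\circ E_n$, and $E_n$ in \eqref{eq:Ek} maps $(v_n,u_n)\mapsto(v_{n-1},u_{n-1})=(g_n^{-1}\e^{g_nv_n},-g_n^{-1}\e^{-g_nu_n})$. A reference time translation $t_n\mapsto t_n+r$, which sends $v_n\mapsto v_n+r$ and $u_n\mapsto u_n+r$, is therefore carried to $v_{n-1}\mapsto\e^{g_nr}v_{n-1}$ and $u_{n-1}\mapsto\e^{-g_nr}u_{n-1}$. This is exactly the boost $B_r^{(n)}$ of \eqref{eq:reference-boost} on $R_{n-1}$. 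In the notation of the linearized flow \eqref{eq:linearized-boost}, $t_n$-translation is the boost generated by $\zeta_n$. Hence
\[
  \chi_n\circ(\text{$t_n$-translation by } r)=\chi_{n-1}\circ B_r^{(n)}\circ E_n .
\]

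To lift this to algebras, I would use the coherence assumption of Appendix~\ref{app:conformal-net}. It lets me write $\Phi_n=\Phi_{n-1}\circ\mathcal E_n$, where $\mathcal E_n:\A(M_{\mathrm{ref}})\to\A(R_{n-1})$ is the local isomorphism induced by $E_n$. It also gives the automorphism-level identity $\mathcal E_n\circ\vartheta_r^{(n)}=B_r^{(n)}\circ\mathcal E_n$, because local implementations of conformal maps compose covariantly on each bounded diamond and then extend by norm closure as in \eqref{eq:quasilocal-algebra}. Combining these gives
\[
  \Phi_n\vartheta_r^{(n)}\Phi_n^{-1}=\Phi_{n-1}B_r^{(n)}\Phi_{n-1}^{-1}=\alpha_r^{(n)} .
\]
Applying the invariance lemma with $\Phi=\Phi_n$ and $\omega=\omega_{n-1}|_{\A(W_n)}$ then yields that $\tau_n$ is $\beta_n$-KMS for $\vartheta^{(n)}$, and the clock identification gives $\beta_n=2\pi/g_n$ directly.

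The main obstacle is the algebraic intertwining. The geometric identity is elementary, but at the net level it requires that the coherent transports satisfy the cocycle/composition law $\Phi_n=\Phi_{n-1}\circ\mathcal E_n$ with no extra inner automorphism or phase. Such a discrepancy would be a $c$-number or unitary ambiguity, analogous to the Schwarzian term, and it would have to be shown to act trivially. I would first try to handle this by working diamond-by-diamond, where every map is implemented by circle-diffeomorphism unitaries. On a single diamond, any discrepancy is a unitary implementing the identity map of a factor, so it acts trivially under $\operatorname{Ad}$. Coherence under inclusion then propagates the identity to the quasilocal algebra.
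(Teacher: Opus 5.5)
Your proposal is correct and follows the same route as the paper. Both apply Theorem~\ref{thm:recursive-bw} at level $n-1$, use the intertwining $\Phi_n\circ\vartheta_r^{(n)}=\alpha_r^{(n)}\circ\Phi_n$, and invoke invariance of the KMS condition under $C^*$-isomorphisms. The difference is that the paper asserts the intertwining ``by construction,'' while you verify it through $E_n\circ(t_n\text{-translation by }r)=B_r^{(n)}\circ E_n$ and the factorization $\Phi_n=\Phi_{n-1}\circ\mathcal E_n$. Your worry about a composition discrepancy is settled by the standard fact that, for a diffeomorphism-covariant net, $\operatorname{Ad}U(\gamma)$ on a diamond algebra depends only on how $\gamma$ acts on that diamond.
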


\begin{proof}
By Theorem~\ref{thm:recursive-bw} at level $n-1$, $\omega_{n-1}|_{\A(W_n)}$ is $\beta_n$-KMS with respect to the level-$n$ parent-time flow.
By construction, $\Phi_n$ intertwines this flow with the ordinary reference time translations $\vartheta_r^{(n)}$. Since the KMS
condition is invariant under $C^*$-algebra isomorphisms, the transported state $\tau_n$ is $\beta_n$-KMS with respect to
$\vartheta_r^{(n)}$.
\end{proof}
We now consider the next recursive step. In the level-$n$ reference spacetime, the region $W_{n+1}$ is represented
by the ordinary right Rindler wedge
\begin{equation}
  W_{n+1}=\chi_n(R_n),
  \label{eq:v2-next-child-reference-wedge}
\end{equation}
as in \eqref{eq:child-split}. Thus, after the shift, the construction has the following form:
\begin{equation}
  \omega_{n-1}\text{ on } W_n\supset W_{n+1}
  \xleftrightarrow{\ \Phi_n^{-1}\ }
  \tau_n\text{ on } M_{\mathrm{ref}}\supset R_n.
  \label{eq:v2-thermal-parent-description}
\end{equation}

The state on $W_n$ has therefore become a thermal state on the full reference spacetime, while its next recursive
child has become an ordinary Rindler wedge.

Let $B_s^{(n+1)}$ be the
ordinary Rindler boost in the reference plane, defined in \eqref{eq:reference-boost}; by definition,
\begin{equation}
  \alpha_s^{(n+1)}
  =\Phi_n\circ B_s^{(n+1)}\circ\Phi_n^{-1}.
  \label{eq:child-boost-through-Phi}
\end{equation}

The preceding observations can now be stated as an exact identity of state--algebra--flow systems.
\begin{theorem}[One-generation-shift identity]
\label{thm:one-generation-shift}
The crossed state--flow system
\begin{equation}
  \bigl(
    \A(W_{n+1}),
    \omega_{n-1}|_{\A(W_{n+1})},
    \alpha^{(n+1)}
  \bigr)
  \label{eq:crossed-physical-system}
\end{equation}
is isomorphic, through the restriction of $\Phi_n$, to the reference Rindler system
\begin{equation}
  \bigl(
    \A(R_n),
    \tau_n|_{\A(R_n)},
    B^{(n+1)}
  \bigr)
  \label{eq:thermal-reference-system}
\end{equation}
\end{theorem}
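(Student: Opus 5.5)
The plan is to show that the restriction $\Psi:=\Phi_n|_{\A(R_n)}$ is a $C^*$-isomorphism $\A(R_n)\to\A(W_{n+1})$ that carries each piece of the reference triple onto the corresponding piece of the crossed triple. Three things have to be checked in order: the algebras match, the states match, and the flows match.

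\emph{Algebras.} By \eqref{eq:Phi-child}, $\Phi_n(\A(R_n))=\A(W_{n+1})$. Since $\Phi_n$ is an isomorphism on the full quasilocal algebra, its restriction is an injective $*$-homomorphism onto $\A(W_{n+1})$. The restriction is therefore a $C^*$-isomorphism, with inverse $\Phi_n^{-1}|_{\A(W_{n+1})}$.

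\emph{States.} Take $A\in\A(R_n)$. Then $\Phi_n(A)\in\A(W_{n+1})\subset\A(W_n)$. By the definition \eqref{eq:thermal-parent-state},
\[
  \tau_n(A)=\bigl(\omega_{n-1}|_{\A(W_n)}\bigr)(\Phi_n(A))=\omega_{n-1}(\Phi_n(A)).
\]
Hence $\tau_n|_{\A(R_n)}=\bigl(\omega_{n-1}|_{\A(W_{n+1})}\bigr)\circ\Psi$. The only thing to confirm is that $\A(W_n)$ is the domain of $\omega_{n-1}$ as a subalgebra of $\A(W_{n-1})$. This holds because $W_n\subset W_{n-1}$, so the quasilocal union \eqref{eq:quasilocal-algebra} for $W_n$ sits inside the one for $W_{n-1}$, and restriction commutes with this further restriction.

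\emph{Flows.} By \eqref{eq:child-boost-through-Phi}, $\alpha_s^{(n+1)}=\Phi_n\circ B_s^{(n+1)}\circ\Phi_n^{-1}$. The reference boost preserves $R_n$, since the region in \eqref{eq:reference-wedges} is invariant under \eqref{eq:reference-boost}, so $B_s^{(n+1)}$ restricts to an automorphism group of $\A(R_n)$. Consequently $\alpha_s^{(n+1)}$ preserves $\A(W_{n+1})$, and on that algebra
\[
  \alpha_s^{(n+1)}\circ\Psi=\Psi\circ B_s^{(n+1)}.
\]
Together with the state identity, this gives the isomorphism of the triples \eqref{eq:crossed-physical-system} and \eqref{eq:thermal-reference-system}.

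The main subtlety is not algebraic but definitional. The boost $B^{(n+1)}$ was defined only as an automorphism group of the reference wedge algebra, so it is invariance of $R_n$ that makes the restriction meaningful. I also need that the same $\Phi_n$ enters both the definition of $\tau_n$ and that of $\alpha^{(n+1)}$, which holds by construction. I would add a remark that nothing here requires KMS or Bisognano--Wichmann: the identity is purely structural. Its content is that $\tau_n$, the thermal state of Proposition~\ref{prop:thermal-parent}, restricted to the Rindler wedge with the observer boost, is the same system as the crossed one. Finally, I would note that $\omega_{n-1}|_{\A(W_{n+1})}$ is generally not KMS for $\alpha^{(n+1)}$, which is why the observer flow differs from the thermal-sector modular flow.
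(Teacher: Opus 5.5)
Your proposal is correct and takes essentially the same route as the paper. Both arguments rest on $\Phi_n(\A(R_n))=\A(W_{n+1})$, the definition of $\tau_n$, and $\alpha_s^{(n+1)}=\Phi_n\circ B_s^{(n+1)}\circ\Phi_n^{-1}$; the paper packages these as the multi-time correlator identity \eqref{eq:crossed-correlator-identity}, which follows at once from your state identity $\tau_n|_{\A(R_n)}=\bigl(\omega_{n-1}|_{\A(W_{n+1})}\bigr)\circ\Psi$ and intertwining relation $\alpha_s^{(n+1)}\circ\Psi=\Psi\circ B_s^{(n+1)}$, using that $\Psi$ is multiplicative.
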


\begin{proof}
For $A_i\in\A(W_{n+1})$, define the corresponding reference observables by
\[
  A_i^{\mathrm{ref}}:=\Phi_n^{-1}(A_i).
\]

Since
\[
  \Phi_n:\A(R_n)\longrightarrow\A(W_{n+1})
\]
is the restriction of the quasilocal isomorphism to the child algebra, and since the child flow satisfies \eqref{eq:child-boost-through-Phi}, we
have
\[
  \alpha_s^{(n+1)}(A_i)=\Phi_n\!\left(B_s^{(n+1)}(A_i^{\mathrm{ref}})\right).
\]

Using the definition of $\tau_n$ and the fact that $\Phi_n$ preserves products, for arbitrary real $s_1,\ldots,s_m$,
\begin{align}
  &\omega_{n-1}\!\left(
    \alpha_{s_1}^{(n+1)}(A_1)\cdots
    \alpha_{s_m}^{(n+1)}(A_m)
  \right)
  \nonumber\\
  &\qquad=\tau_n\!\left(
    B_{s_1}^{(n+1)}(A_1^{\mathrm{ref}})\cdots
    B_{s_m}^{(n+1)}(A_m^{\mathrm{ref}})
  \right).
  \label{eq:crossed-correlator-identity}
\end{align}

Hence the algebra, state, and observer flow are transported together, and the two descriptions define isomorphic
state--flow systems.
\end{proof}

For $n=1$, the one-generation shift gives the correspondence between the description in terms of the Minkowski
vacuum and Rindler--Rindler observables and the equivalent description in terms of a thermal Minkowski state
and ordinary Rindler observables. Thus Theorem~\ref{thm:one-generation-shift} provides a state--algebra--flow formulation of the second
observation of Ref.~\cite{Kolekar:2013hra}. In particular, the equivalence is not restricted to a particle-number spectrum: all bounded
correlation functions related by the conformal transport satisfy the exact identity \eqref{eq:crossed-correlator-identity}. In the free quasi-free
representation, the corresponding Bogoliubov transformation can be applied to suitably normalized wave-packet
observables and reproduces the occupation-number relation found in Ref.~\cite{Kolekar:2013hra}. Since mode-number operators are unbounded and depend on the choice of
modes and wave-packet normalization, this representation-dependent
statement is not taken as the content of the theorem. The exact statement
is the state--algebra--flow identity \eqref{eq:crossed-correlator-identity}.

\subsection{Thermal-wedge modular flow and observer flow}
\label{subsec:thermal-observer-flows}

There is, however, an important distinction between the observer flow appearing in Theorem~\ref{thm:one-generation-shift} and the
modular flow of the thermal state restricted to the wedge. The reference boost $B_s^{(n+1)}$ is the ordinary Rindler
observer flow. It is generally not the modular automorphism group of the pair

\begin{equation}
  \left(
    \A(R_n),
    \tau_n|_{\A(R_n)}
  \right).
  \label{eq:thermal-wedge-pair}
\end{equation}

The latter modular flow is the thermal-sector geometric flow associated with the restriction of a thermal state
to a Rindler wedge. Its future-directed geometric representative can be obtained by expressing the ordinary
translated-wedge boost in the level-$(n-1)$ coordinates.
The latter is obtained by pulling back the ordinary translated-wedge
Bisognano--Wichmann flow of the ancestor vacuum.

In level-$(n-1)$ coordinates, the wedge corresponding to $R_n$ has tip

\begin{equation}
  (v_{n-1},u_{n-1})=\left(\frac1{g_n},-\frac1{g_n}\right).
  \label{eq:thermal-wedge-tip}
\end{equation}

Its unit-slope translated-wedge boost is
\begin{equation}
  \widetilde\xi_{n+1}^{\mathrm{iso},(n-1)}
  =\left(v_{n-1}-\frac1{g_n}\right)\partial_{v_{n-1}}
  -\left(u_{n-1}+\frac1{g_n}\right)\partial_{u_{n-1}}.
  \label{eq:unit-slope-translated-boost}
\end{equation}

Using the elementary map
\begin{equation}
  v_{n-1}=\frac1{g_n}\e^{g_nv_n},
  \qquad
  u_{n-1}=-\frac1{g_n}\e^{-g_nu_n},
  \label{eq:thermal-map}
\end{equation}

we have
\begin{equation}
  \partial_{v_{n-1}}=\e^{-g_nv_n}\partial_{v_n},
  \qquad
  \partial_{u_{n-1}}=\e^{g_nu_n}\partial_{u_n}.
  \label{eq:thermal-map-derivatives}
\end{equation}

Consequently, the same vector field expressed in level-$n$ coordinates is
\begin{equation}
  \xi_{\beta_n}^{\mathrm{th}}
  =\frac1{g_n}\left[
    \left(1-\e^{-g_nv_n}\right)\partial_{v_n}
    +\left(1-\e^{g_nu_n}\right)\partial_{u_n}
  \right].
  \label{eq:thermal-vector-gn}
\end{equation}
Since $\beta_n=2\pi/g_n$,
\begin{equation}
  \boxed{
  \xi_{\beta_n}^{\mathrm{th}}
  =\frac{\beta_n}{2\pi}\left[
    \left(1-\e^{-2\pi v_n/\beta_n}\right)\partial_{v_n}
    +\left(1-\e^{2\pi u_n/\beta_n}\right)\partial_{u_n}
  \right].}
  \label{eq:BY-vector}
\end{equation}

Near the entangling point this reduces to the unit-slope boost
\[
  v_n\partial_{v_n}-u_n\partial_{u_n},
\]
while deep in the wedge it approaches
\[
  \frac{\beta_n}{2\pi}\partial_{t_n}.
\]
This is the Borchers--Yngvason thermal-sector geometric flow \cite{Borchers:1998ye,Caminiti:2025hjq}. Equation~\eqref{eq:BY-vector} gives its future-directed
physical orientation; the corresponding Tomita modular parameter has the opposite orientation and the $2\pi$
rescaling stated in Appendix~\ref{subsec:tomita-physical-time}.

The two flows appearing in this discussion should therefore be kept distinct. The ordinary reference-plane
Rindler boost is transported directly to the recursive child:

\begin{equation}
  (\chi_n)_*\!\left[
    g_{n+1}\left(v_n\partial_{v_n}-u_n\partial_{u_n}\right)
  \right]
  =\zeta_{n+1}.
  \label{eq:chi-pushforward-Rindler}
\end{equation}

By contrast, the thermal modular flow is related to the ordinary translated-wedge boost of the ancestor description.
Using the elementary map $E_n$,

\begin{equation}
  (E_n)_*\!\left[g_{n+1}\xi_{\beta_n}^{\mathrm{th}}\right]
  =\xi_{n+1}^{\mathrm{iso},(n-1)},
  \label{eq:En-pushforward-thermal}
\end{equation}

where

\begin{equation}
  \xi_{n+1}^{\mathrm{iso},(n-1)}
  =g_{n+1}\left[
    \left(v_{n-1}-\frac1{g_n}\right)\partial_{v_{n-1}}
    -\left(u_{n-1}+\frac1{g_n}\right)\partial_{u_{n-1}}
  \right].
  \label{eq:ancestor-translated-boost}
\end{equation}

Applying $(\chi_{n-1})_*$ subsequently expresses this translated-wedge boost in the original level-zero coordinates.

At the algebraic level, the first relation is simply

\begin{equation}
  \alpha_s^{(n+1)}
  =\Phi_n\circ B_s^{(n+1)}\circ\Phi_n^{-1}.
  \label{eq:observer-flow-algebraic}
\end{equation}

Here $B_s^{(n+1)}$ is the algebraic automorphism induced by the geometric reference-plane boost $B_s^{(n+1)}$.

For the modular groups, let $\Theta_n^{\mathrm{th}}$ denote the normal GNS
extension induced by $\Phi_n$ for the states $\tau_n$ and
$\omega_{n-1}|_{\A(W_n)}$, as in Appendix~\ref{subsec:GNS-transport}. Then

\begin{equation}
  \Theta_n^{\mathrm{th}}
  \circ\sigma_\tau^{\,\tau_n,R_n}
  \circ(\Theta_n^{\mathrm{th}})^{-1}
  =\sigma_\tau^{\,\omega_{n-1},W_{n+1}}.
  \label{eq:modular-group-transport-thermal}
\end{equation}

The resulting dictionary is therefore

\begin{equation}
\begin{gathered}
\text{Rindler observer flow in the reference plane}
\xrightarrow{\ \Phi_n\ }
\text{recursive child observer flow},\\[1mm]
\text{thermal-wedge modular group}
\xrightarrow{\ \Theta_n^{\mathrm{th}}\ }
\text{ancestor-vacuum translated-wedge modular group}.
\end{gathered}
\label{eq:one-generation-dictionary}
\end{equation}

The first row is the one-generation-shift identity and is the state--algebra--flow version of the Kolekar--Padmanabhan comparison. The second row describes the modular structure of the thermal state appearing after
the shift. The two flows need not coincide. In particular, the fact that $\tau_n$ is thermal with respect to ordinary
reference-time translations does not imply that its restriction to $R_n$ is modular with respect to the ordinary
Rindler boost. This distinction is essential for keeping the observer-flow statement of Theorem~\ref{thm:one-generation-shift} separate from
the modular-flow statement.

Therefore
the crossed state $\omega_{n-1}|_{\A(W_{n+1})}$ need not be KMS for the
recursive observer flow $\zeta_{n+1}$ even though all of its bounded
Rindler--Rindler observables admit the thermal-parent interpretation of
Theorem~\ref{thm:one-generation-shift}.

\section{Conclusions and outlook}
\label{sec:conclusions_outlook}

The aim of this work was to formulate recursive Rindler thermality directly at the level of quantum states, local algebras, and modular flow. The main result is the recursive KMS statement
\begin{equation}
    \left.\omega_n\right|_{\mathcal{A}(W_{n+1})}
    \text{ is KMS for } \alpha_s^{(n+1)}
    \quad\text{at}\quad
    \beta_{n+1}=\frac{2\pi}{g_{n+1}}.
    \label{eq:conclusion_recursive_kms}
\end{equation}
The mechanism is the same at every generation: the vacuum restricted to the standard reference wedge is KMS by the Bisognano--Wichmann property, and this state--flow pairing is transported to the recursive child. When the reference vacuum representation is irreducible, the transported parent state is pure, while its restriction to the child is KMS. With wedge duality, the sibling algebra is the modular commutant of the right child. 

For the modular generators, we derived the recursive iterated-log weights, which give the one-sided stress-tensor representation of the modular generator. Near every child cut these weights reduce to the usual linear Rindler form. The Schwarzian stress profiles demonstrate that the hierarchy is not a passive coordinate relabelling of one vacuum: the states are vacuum-like in different parent frames but have distinct stress profiles when expressed in a common frame.

A further result follows by shifting the root of the construction by one generation. The state $\omega_{n-1}$ restricted to $W_n$, when transported back to the level-$n$ reference plane, becomes a thermal state on the full reference spacetime. The next recursive step is then an ordinary Rindler split of this thermal parent. This is an exact state--algebra--flow correspondence for all bounded correlation functions related by the conformal transport.

A natural direction for future work is to extend the recursive
construction to nested causal diamonds, the domains of dependence
of finite spatial intervals. The vacuum modular flows for these
regions are known~\cite{Hislop:1981uh,Cardy:2016fqc,Caminiti:2025hjq},
providing the reference dynamics for an analogous conformal
transport of states and modular flows. It would be useful to
construct explicit recursive families and determine their modular
generators, together with the conditions under which their modular
commutants admit a geometric interpretation in terms of
complementary regions.

\section*{Acknowledgments}
ChatGPT Sol 5.6 and Claude Fable 5 were consulted through the work and used to prepare the manuscript.

\appendix

\section{KMS conventions and modular transport}
\label{app:KMS-conventions}

\subsection{Lower- and upper-strip formulations}
\label{subsec:KMS-strips}

Let $\mathfrak A$ be a $C^*$- or von Neumann algebra, let $\alpha_s$ be a
strongly continuous automorphism group, and let $\omega$ be a state. We use
active Heisenberg evolution
\begin{equation}
  \alpha_s(A)=\e^{\ii sH}A\e^{-\ii sH}
  \label{eq:active-Heisenberg}
\end{equation}
when a Hamiltonian exists. In the lower-strip formulation, $\omega$ is
$\beta$-KMS if, for analytic elements $A,B$, there is a bounded function
$F_{A,B}(z)$ analytic for
\begin{equation}
  -\beta<\operatorname{Im}z<0
  \label{eq:KMS-lower-strip}
\end{equation}
and continuous on the closed strip, with boundary values
\begin{equation}
  F_{A,B}(t)=\omega(\alpha_t(A)B),
  \qquad
  F_{A,B}(t-\ii\beta)=\omega(B\alpha_t(A)).
  \label{eq:KMS-lower-boundaries}
\end{equation}
In finite dimension, for $\mathfrak A=B(\Hh)$ with inner dynamics, this
gives
\begin{equation}
  \rho=\frac{\e^{-\beta H}}{\Tr(\e^{-\beta H})}.
  \label{eq:Gibbs-state}
\end{equation}
The equivalent upper-strip convention uses a bounded $G_{A,B}(z)$ analytic
for $0<\operatorname{Im}z<\beta$ with
\begin{equation}
  G_{A,B}(t)=\omega(A\alpha_t(B)),
  \qquad
  G_{A,B}(t+\ii\beta)=\omega(\alpha_t(B)A).
  \label{eq:KMS-upper-boundaries}
\end{equation}
The two forms are related by stationarity and reversal of complex time. The
strip, operator ordering, $\ii0$ prescription, and sign of the time parameter
must be kept together \cite{Haag:1967sg,Summers:2003tf}.

\subsection{Tomita flow and physical boost time}
\label{subsec:tomita-physical-time}

Let $N\subset B(\Hh)$ and let $\Omega$ be cyclic and separating \cite{Summers:2003tf}. The closure
of
\begin{equation}
  S_0A\Omega=A^*\Omega
  \label{eq:Tomita-S0}
\end{equation}
has polar decomposition
\begin{equation}
  S=J\Delta^{1/2}.
  \label{eq:Tomita-polar}
\end{equation}
We use the standard Tomita modular automorphism group
\begin{equation}
  \sigma_\tau^\Omega(A)=\Delta^{\ii\tau}A\Delta^{-\ii\tau}.
  \label{eq:Tomita-group}
\end{equation}
Tomita--Takesaki theory gives
\begin{equation}
  \sigma_\tau^\Omega(N)=N,
  \qquad
  JNJ=N'.
  \label{eq:Tomita-results}
\end{equation}
For analytic $A,B$, the vector state is KMS at unit modular inverse
temperature. In the upper-strip convention, there is a bounded function
$H_{A,B}(z)$ analytic for $0<\operatorname{Im}z<1$ with
\begin{equation}
  H_{A,B}(t)=\omega_\Omega(\sigma_t^\Omega(A)B),
  \qquad
  H_{A,B}(t+\ii)=\omega_\Omega(B\sigma_t^\Omega(A)).
  \label{eq:Tomita-KMS}
\end{equation}
For a right wedge whose future-directed physical boost parameter is $s$ and
whose normalization is $\kappa>0$, the Bisognano--Wichmann identification in
these conventions is
\begin{equation}
  \sigma_\tau^\Omega=\alpha_{-2\pi\tau/\kappa},
  \qquad
  \alpha_s=\sigma_{-\kappa s/(2\pi)}^\Omega.
  \label{eq:Tomita-physical-relation}
\end{equation}
The minus sign relates the standard Tomita parameter to the chosen active
physical time orientation. The corresponding physical inverse temperature
is
\begin{equation}
  \beta_s=\frac{2\pi}{\kappa}.
  \label{eq:physical-beta}
\end{equation}
Writing $L=-\log\Delta$, the canonical standard implementer of the physical
flow has generator
\begin{equation}
  H_{\mathrm{std}}=\frac{\kappa}{2\pi}L.
  \label{eq:Hstd}
\end{equation}
A one-sided stress-tensor generator can implement the same derivation on the
right algebra while differing from $H_{\mathrm{std}}$ by an operator
affiliated with the commutant, as in \eqref{eq:L-KR-KL}.

\subsection{Transport through GNS representations}
\label{subsec:GNS-transport}

Let $\Phi:\mathfrak A\to\mathfrak B$ be a $*$-isomorphism, let $\omega$ be a
state on $\mathfrak A$, and set $\psi=\omega\circ\Phi^{-1}$. The map
\begin{equation}
  V\pi_\omega(A)\Omega_\omega
  =\pi_\psi(\Phi(A))\Omega_\psi
  \label{eq:GNS-unitary}
\end{equation}
preserves inner products and extends to a unitary. For a local subalgebra
$\mathfrak m\subset\mathfrak A$, set
\begin{equation}
  N=\overline{\pi_\omega(\mathfrak m)}^{\,\mathrm{wot}},
  \qquad
  P=\overline{\pi_\psi(\Phi(\mathfrak m))}^{\,\mathrm{wot}}.
  \label{eq:GNS-local-closures}
\end{equation}
Then $VNV^*=P$. If the GNS vectors are cyclic and separating,
\begin{equation}
  S_P=VS_NV^*,
  \qquad
  \Delta_P=V\Delta_NV^*,
  \qquad
  J_P=VJ_NV^*.
  \label{eq:transported-modular-data}
\end{equation}
At the $C^*$ level, if $\alpha_s^\Phi=\Phi\alpha_s\Phi^{-1}$, then for
entire analytic elements
\begin{equation}
  \psi\!\left(\alpha_z^\Phi(\Phi(A))\Phi(B)\right)
  =\omega(\alpha_z(A)B),
  \label{eq:KMS-isomorphism-transport}
\end{equation}
so the strip, ordering, and inverse temperature transport unchanged.

\section{Conformal-net assumptions and local transport}
\label{app:conformal-net}

\subsection{Local net and covariance}
\label{subsec:local-net-covariance}

A two-dimensional conformal net assigns a von Neumann algebra $\A(O)$ to
each bounded causal diamond (also called a double cone)
\begin{equation}
  O=I_+\times I_-=I^+(p)\cap I^-(q),
  \label{eq:causal-diamond-definition}
\end{equation}
and satisfies isotony, locality, positive energy, covariance, and the vacuum
condition. We assume bosonic locality; graded field nets require graded
locality and twisted duality. For brevity in this appendix only, we write
$O\Subset R$ when $\overline O$ is compact and $\overline O\subset R$.

Diffeomorphism covariance supplies isomorphisms for orientation-preserving
chiral maps on bounded intervals. If
\begin{equation}
  f_+:I_+\to J_+,
  \qquad
  f_-:I_-\to J_-,
  \label{eq:chiral-maps}
\end{equation}
then there is a local isomorphism
\begin{equation}
  \alpha_{f_+,f_-}:\A(I_+\times I_-)
  \longrightarrow\A(J_+\times J_-).
  \label{eq:local-conformal-isomorphism}
\end{equation}
The exponential maps used here are not single vacuum-preserving global
diffeomorphisms of the compactified line. On every compact interval away
from a horizon endpoint they can, however, be extended to circle
diffeomorphisms. We assume the induced local maps are chosen coherently under
inclusion; this yields the norm-inductive-limit isomorphism
\eqref{eq:Phi-child}.

Local normality follows from the same implementation. Fix $O\Subset W_n$,
set $O^{\mathrm{ref}}=\chi_n^{-1}(O)$, and choose global chiral
circle-diffeomorphism extensions $f_\pm$ of the chart on $O^{\mathrm{ref}}$. In
the original vacuum representation let $U(f_+,f_-)$ implement these
extensions. If $A\in\A(O^{\mathrm{ref}})$ and
\begin{equation}
  B=\Phi_n(A)=U(f_+,f_-)AU(f_+,f_-)^*\in\A(O),
  \label{eq:local-implementation}
\end{equation}
then
\begin{equation}
  \omega_n(B)
  =\langle\Omega_0,A\Omega_0\rangle
  =\langle U(f_+,f_-)\Omega_0,
    B\,U(f_+,f_-)\Omega_0\rangle.
  \label{eq:local-normality}
\end{equation}
Thus $\omega_n|_{\A(O)}$ is a vector state in the original vacuum
representation and is normal.

\subsection{Bisognano--Wichmann property and duality}
\label{subsec:BW-duality}

We assume the reference vacuum vector is cyclic and separating for the
reference wedge algebras and that
\begin{equation}
  (\Delta_R^{\mathrm{ref}})^{\ii\tau}
  =U^{\mathrm{ref}}\!\left(B_R(-2\pi\tau)\right),
  \label{eq:reference-BW-property}
\end{equation}
where $B_R(r)$ is the right-wedge-preserving Lorentz boost of rapidity $r$
and $U^{\mathrm{ref}}$ is its reference vacuum representation. This is the
Bisognano--Wichmann property in the sign convention of
\eqref{eq:Tomita-group}. For the geometric purifier we also assume essential
wedge duality,
\begin{equation}
  (M_R^{\mathrm{ref}})'=M_L^{\mathrm{ref}}.
  \label{eq:essential-wedge-duality}
\end{equation}
Both statements are standard in the relevant conformal-net setting
\cite{Brunetti:1992zf}. If only the first holds, the recursive KMS
theorem remains intact but the commutant cannot automatically be identified
with the algebra of $D_n$.

\bibliographystyle{unsrt}
\bibliography{main}

\end{document}